\documentclass[11pt]{article}

\DeclareMathAlphabet{\mathrmbf}{\encodingdefault}{\itdefault}{bx}{n}

\pagestyle{plain}

\usepackage{comment}
\usepackage{parskip}
\usepackage{times}
\usepackage{fullpage}
\usepackage{amsmath}
\usepackage{amsfonts}
\usepackage{amssymb}
\usepackage{graphicx}
\usepackage{color}
\usepackage{url,doi}
\usepackage{hyperref}
\usepackage{bbold}
\usepackage{xspace}
\usepackage{authblk}

\newtheorem{theorem}{Theorem}
\newtheorem{observation}[theorem]{Observation}
\newtheorem{lemma}[theorem]{Lemma}

\newtheorem{defn}{Definition}

\newtheorem{protocol}{Protocol \boldmath}

\newcommand{\op}[1]{\mathsf{#1}}
\newcommand{\party}[1]{\mathbb{#1}}
\newcommand{\ket}[1]{\ensuremath{|{#1}\rangle}}
\newcommand{\tinner}[3]{\ensuremath{\langle{#1}|#2|{#3}\rangle}}
\newcommand{\bra}[1]{\ensuremath{\langle{#1}|}}
\newcommand{\Tr}{\mbox{\rm Tr}}
\newcommand{\tinyspace}{\mspace{1mu}}
\newcommand{\abs}[1]{\left\lvert\tinyspace #1 \tinyspace\right\rvert}

\newcommand{\diag}{\mathop{\mathrm{diag}}}

\newcommand{\head}{\textup{head}}
\newcommand{\tail}{\textup{tail}}
\newcommand{\psd}{\text{positive semidefinite}}

\newcommand{\nth}[1]{\ensuremath{#1}^{\textup{th}}}

\newcommand{\id}{{\mathbb{1}}}
\newcommand{\dualm}{\mathcal{M}}

\newcommand{\zo}{\{0,1\}}
\setcounter{MaxMatrixCols}{11}

\def\squareforqed{\hbox{\rlap{$\sqcap$}$\sqcup$}}
\def\qed{\ifmmode\squareforqed\else{\unskip\nobreak\hfil
\penalty50\hskip1em\null\nobreak\hfil\squareforqed
\parfillskip=0pt\finalhyphendemerits=0\endgraf}\fi}

\newenvironment{proof}{\begin{trivlist}\item[]{\flushleft\bf Proof }}
{\qed\end{trivlist}}

\newenvironment{proofof}[1]{\begin{trivlist}\item[]{\flushleft\bf 
Proof of~#1 }}
{\qed\end{trivlist}}

\makeatletter
\newif\if@borderstar
\def\bordermatrix{\@ifnextchar*{%
  \@borderstartrue\@bordermatrix@i}{\@borderstarfalse\@bordermatrix@i*}%
}
\def\@bordermatrix@i*{\@ifnextchar[{%
  \@bordermatrix@ii}{\@bordermatrix@ii[()]}
}
\def\@bordermatrix@ii[#1]#2{%
  \begingroup
    \m@th\@tempdima8.75\p@\setbox\z@\vbox{%
      \def\cr{\crcr\noalign{\kern 2\p@\global\let\cr\endline }}%
      \ialign {$##$\hfil\kern 2\p@\kern\@tempdima & \thinspace %
      \hfil $##$\hfil && \quad\hfil $##$\hfil\crcr\omit\strut %
      \hfil\crcr\noalign{\kern -\baselineskip}#2\crcr\omit %
      \strut\cr}}%
    \setbox\tw@\vbox{\unvcopy\z@\global\setbox\@ne\lastbox}%
    \setbox\tw@\hbox{\unhbox\@ne\unskip\global\setbox\@ne\lastbox}%
    \setbox\tw@\hbox{%
      $\kern\wd\@ne\kern -\@tempdima\left\@firstoftwo#1%
        \if@borderstar\kern2pt\else\kern -\wd\@ne\fi%
      \global\setbox\@ne\vbox{\box\@ne\if@borderstar\else\kern 2\p@\fi}%
      \vcenter{\if@borderstar\else\kern -\ht\@ne\fi%
        \unvbox\z@\kern-\if@borderstar2\fi\baselineskip}%
        \if@borderstar\kern-2\@tempdima\kern2\p@\else\,\fi\right\@secondoftwo#1 $%
    }\null \;\vbox{\kern\ht\@ne\box\tw@}%
  \endgroup
}
\makeatother

\newcommand{\bfunc}[1]{\textsf{#1}}
\begin{document}
\title{Quantum Nonlocal Boxes Exhibit Stronger Distillability}

\renewcommand\Authands{ and }
\renewcommand\Affilfont{\slshape\small}

\author{Peter H{\o}yer}
\author{Jibran Rashid}

\affil{\,Department of Computer Science, University of Calgary \authorcr
\textsl{2500 University Drive N.W., Calgary, AB, T2N 1N4~Canada.}
\textup{\small \{hoyer,\,jrashid\}@ucalgary.ca}}

\date{}

\maketitle

\begin{abstract}
The hypothetical nonlocal box (\bfunc{NLB}) proposed by Popescu and Rohrlich allows two spatially separated parties, Alice and Bob, to exhibit stronger than quantum correlations. If the generated correlations are weak, they can sometimes be distilled into a stronger correlation by repeated applications of the \bfunc{NLB}. Motivated by the limited distillability of \bfunc{NLB}s, we initiate here a study of the distillation of correlations for nonlocal boxes that output quantum states rather than classical bits (\bfunc{qNLB}s). We propose a new protocol for distillation and show that it asymptotically distills a class of correlated quantum nonlocal boxes to the value $\frac{1}{2}(3\sqrt{3}+1) \approx 3.098076$, whereas in contrast, the optimal non-adaptive parity protocol for classical nonlocal boxes asymptotically distills only to the value~$3.0$. We~show that our protocol is an optimal non-adaptive protocol for~$1$, $2$ and~$3$ \bfunc{qNLB} copies by constructing a matching dual solution for the associated primal semidefinite program (SDP). We~conclude that \bfunc{qNLB}s are a stronger resource for nonlocality than \bfunc{NLB}s. The main premise that develops from this conclusion is that the \bfunc{NLB} model is not the strongest resource to investigate the fundamental principles that limit quantum nonlocality. As such, our work provides strong motivation to reconsider the status quo of the principles that are known to limit nonlocal correlations under the framework of \bfunc{qNLB}s rather than \bfunc{NLB}s.
\end{abstract}

\section{Nonlocality distillation}
\label{sec:intro}
Consider two parties, Alice and Bob, spatially separated and isolated, interested in jointly computing some boolean function $f(\cdot,\cdot)$.  A~third party, David, provides Alice with an input~$x$ (unbeknown to Bob) and Bob with an input $y$ (unbeknown to Alice) and challenges them to compute the bit \mbox{$f(x,y)$}. David allows Alice and Bob to communicate, but charges for each and every bit communicated between them. Alice and Bob therefore pre-agree upon a protocol that minimizes the amount of communication required for them to compute the bit~$f(x,y)$. This is what we know as communication complexity~\cite{Kush}.

It seems entirely impossible to jointly compute a non-trivial function if no information can be interchanged between Alice and Bob, and it is indeed one of the first results typically shown in any introduction to communication complexity.  But as soon as one tweaks the models ever so slightly, surprising results are possible. The nonlocal box is one such tweaking. 

A~\emph{nonlocal box} (\bfunc{NLB}) is a device shared between two parties that, in itself is incapable of transferring any information from Alice to Bob, or vice-versa. A~nonlocal box takes two bits as input, a bit $x$ from Alice and a bit $y$ from Bob, and outputs two bits, a bit $a$ provided to Alice (and only Alice) and a bit $b$ provided to Bob (and only Bob).  If the two input bits $x$ and $y$ from Alice and Bob equal $(0,0)$, $(0,1)$, or $(1,0)$, the box (by definition) provides Alice and Bob with identical bits. That is, either both of them receive $0$ or both of them receive~$1$, each case happening with probability~$\frac{1}{2}$.  If the two parties both give the box a $1$ as input, the box provides Alice and Bob with opposite bits $x$ and $y$, again each of the two cases $01$ and $10$ happening with equal probabilities~$\frac{1}{2}$. See Figure~\ref{fig:nlb}. 

The correlations related to the \bfunc{NLB} and some of their key properties were initially discovered by Khalfin and Csirel'son~\cite{Cirel85} in 1985. Reintroduced by Popescu and Rohrlich in their seminal 1994 paper~\cite{Popescu94b}, \bfunc{NLB}s, have since undergone extensive scrutiny.
\begin{figure}[t]
\centering
\scalebox{1}{\includegraphics{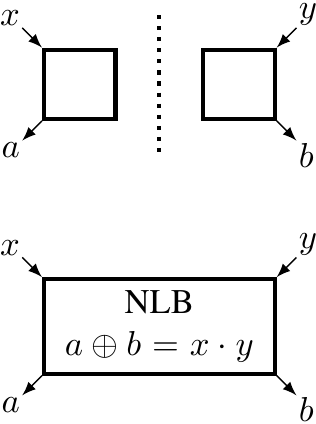}}
\caption[A Nonlocal Box (\bfunc{NLB})]{A nonlocal box. The figures depict the \bfunc{NLB} as a single box rather than two invidual boxes shared between Alice and Bob.}
\label{fig:nlb}
\end{figure}
A nonlocal box is by definition \emph{non-signalling}.  The marginal of the bit $a$ received by Alice is uniform irrespective of whether Bob inputs $0$ or $1$ to the box, and she thus does not obtain any information about Bob's input.  Yet the parity of the two output bits $a$ and $b$ is perfectly correlated with the logical $\textsf{AND}$ of the two input bits $x$ and~$y$. 

The perfect nonlocal box (as defined above) is powerful enough to render all of communication complexity trivial, i.e., any boolean function may be computed by a single bit of communication between Alice and Bob~\cite{Dam05}.  Even if we modify the box so that, for each of the four possible inputs, it provides an output of the expected parity only with probability at least $\frac{3+\sqrt{6}}{6} \approx 0.908$, it would still be possible to compute any boolean function with bounded error using only a single bit of
communication~\cite{Brassard05}!

Our motivation for the current work develops from a simple open question, i.e., do noisy \bfunc{NLB}s within the range \mbox{$\cos^2 \left( \frac{\pi}{8} \right) < p < \frac{3+\sqrt{6}}{6}$} allow for trivial communication complexity? The agenda in this approach is to show that quantum mechanics restricts correlation sources that result in a world in which surprisingly powerful information processing procedures could be performed. A different example of this line of work from cryptography is due to Buhrman et al.~\cite{Buhrman06}, which shows that \bfunc{NLB}s can be used to be perform any two-party secure computation. They build protocols for bit commitment and oblivious transfer using \bfunc{NLB}s, both of which are known to be impossible to achieve using quantum mechanics.

Nonlocality distillation refers to the extent by which we can turn
weak nonlocal boxes into more pure nonlocal boxes through a protocol. The idea is to consider whether it is possible for the players to concentrate the nonlocality in $n$ copies of an imperfect nonlocal source to form a stronger nonlocal correlation source. In this sense it may be considered similar to entanglement distillation. We have gained some understanding of when nonlocality can be distilled~\cite{SWolf08a, SWolf08b, SWolf09, Brunner09, Allcock09a, Allcock09b, Hoyer10, Forster11}, when it cannot~\cite{Short09} and when it appears in bound form~\cite{Brunner10}.  In~general, the results suggest that distillation is only possible under special favorable circumstances and that large classes of nonlocal boxes are not distillable. 

The apparent limited distillability of \bfunc{NLB}s even under adaptive protocols seems to suggest that distillation may not be a strong enough framework to draw conclusions regarding limits on nonlocal correlations. With the introduction of the \bfunc{qNLB} model, we hope to change the situation. Historically, it took a period of more than half a century to realize that a more feasible interpretation of Bell inequality violations is to view them as a resource for processing information, rather than as paradoxes. Apparently, the same restrictive reasoning haunts us where we view the violation of Csirelson's inequality as something to be written off as an impossibility. No doubt, it is crucial to determine the principles that determine bounds on quantum correlations, however, another approach is to construct communication models that produce exactly the correlations within the no-signalling polytope.

We~approach stronger that quantum correlations with this new perspective. Rather than
considering a hypothetical box resource, the spatially separated parties Alice and Bob, are now provided access to a trusted third party Charlie. Charlie is allowed to communicate with Alice and Bob without allowing communication \emph{between} Alice and Bob. Consider the scenario depicted in Figure~\ref{fig:physmodel}.  David who wants to compute a boolean function $f(x,y)$, provides Alice and Bob with a description of~$f$ and the partitioned inputs~$x$ and~$y$. Alice and Bob may now use another trusted party Charlie who simulates the actions of a \bfunc{NLB}/\bfunc{qNLB}.  This allows Alice and Bob to determine and transmit~$a$ and~$b$ to David such that \mbox{$f(x,y)=a \oplus b$}. The three parties are able to help in computing the function~$f$ without any of them having access to complete information.  Alice and Bob know the function, but not the complete input, nor its value, while Charlie knows the input without knowing the function being computed.  

\begin{figure}[t]
\centering
\scalebox{1.3}{\includegraphics{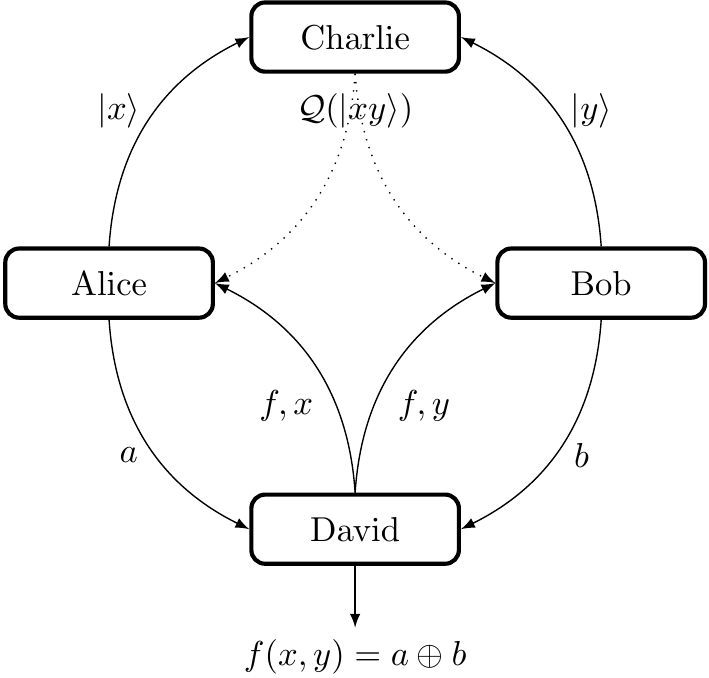}} 
\caption[Possible Physical Simulation of a \bfunc{\bfunc{qNLB}}]{A~possible physical realization of the quantum nonlocal box. David who wants to compute a boolean function $f(x,y)$ transmits the description of~$f$ as well as~$x$ and~$y$ to Alice and Bob respectively. Alice and Bob share the state $\ket{xy}$ with Charlie, who implements and distributes the results of the no-signalling map $\mathcal{Q}(\ket{xy})$. Finally, Alice and Bob
  send~$a$ and~$b$ to David, such that \mbox{$f(x,y) = a \oplus b.$}}
\label{fig:physmodel}
\end{figure}

Charlie's actions can be modelled by a nonlocal box that produces correlated physical systems as output. A~quantum nonlocal box, abbreviated \bfunc{qNLB}, takes as input a joint quantum state and outputs a joint quantum state.  A~priori, such a model may not obey our non-signalling requirement since any unitary $\op{U}_{AB}$ not on the form $\op{U}_{A} \otimes \op{U}_{B}$ allows for signalling~\cite{Benn03, Piani06}.  It~thus may appear that a quantum generalization of the \bfunc{NLB} model would always allow for signalling, but this only holds true if we restrict the maps to be unitary.  Quantum nonlocal boxes that satisfy the non-signalling
requirement and allow for quantum states as output are possible when
we drop the requirement of the box being unitary.  Such boxes have
previously been studied under the notion of causal maps, completely
positive trace-preserving maps, and non-signalling
operations~\cite{Marcovitch07, Piani06, Beckman01, Gus10, Buhrman04}.

As~our main result, we show that \bfunc{qNLB}s exhibit strictly stronger nonlocality distillation than \bfunc{NLB}s when restricted to non-adaptive distillation protocols. We show that in such
a scenario, the optimal non-adaptive nonlocality distillation protocol for Alice and Bob
asymptotically performs better than the optimal non-adaptive distillation parity protocol
for \bfunc{NLB}s~\cite{Hoyer10}.
\begin{theorem} 
\label{thm:mainthm}
Quantum nonlocal boxes exhibit stronger nonlocality distillation for
non-adaptive protocols than the optimal non-adaptive parity protocol
for classical nonlocal boxes.
\end{theorem}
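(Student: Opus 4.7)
The plan is to establish the separation constructively: I will exhibit an explicit non-adaptive distillation protocol for \bfunc{qNLB}s whose asymptotic correlation value strictly exceeds the value~$3$ that \cite{Hoyer10} proved is the best achievable by any non-adaptive parity protocol on classical \bfunc{NLB}s. Once such a protocol is in hand with asymptotic value $\frac{1}{2}(3\sqrt{3}+1) > 3$, the theorem follows immediately by comparison. The extra power over the classical case must come from the fact that qNLBs output quantum registers, so Alice and Bob can apply a coherent, input-dependent measurement to the $n$ output registers rather than being restricted to XOR-ing classical outputs.

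Concretely, I would first describe the protocol formally: Alice and Bob feed $n$ copies of a biased qNLB in parallel (with inputs determined non-adaptively from their challenge bits $x$ and $y$), then each applies a fixed, input-dependent projective measurement to their bundle of $n$ quantum output registers to produce a single bit. The key design choice is the family of measurement angles; I~expect the optimal choice for the \textsf{AND}-input branch to be a rotation by angle $\pi/6$, which is what ultimately produces the factor $\sqrt{3}$ in the asymptotic value $\frac{1}{2}(3\sqrt{3}+1)$. Computing the success probability on each of the four input settings and taking the limit $n \to \infty$ then gives the target CHSH-type value.

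To show that this protocol is in fact the best non-adaptive protocol (and thus rule out the possibility that the separation is an artifact of a suboptimal quantum choice of measurement), I~would cast the optimal non-adaptive distillation value as a semidefinite program. The primal variables are the Kraus operators, or equivalently a Choi matrix, of Alice's and Bob's post-processing maps, subject to complete-positivity and non-signalling constraints, and the objective is a linear combination of the four conditional success probabilities. I~would then write the Lagrangian dual and, for $n=1,2,3$, construct explicit dual-feasible witnesses whose objective value equals the primal value achieved by our protocol, certifying optimality for these small cases.

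The main obstacle is the asymptotic step: one must identify the correct measurement family, prove that the success probabilities indeed converge to the values predicted by the $\pi/6$ rotation, and verify that no other symmetric non-adaptive choice does better in the limit. A~secondary difficulty is constructing the matching dual SDP solutions for $n=2$ and $n=3$, because the Choi-matrix ansatz and the dual matrix must both respect the combinatorial symmetries of the $\textsf{AND}$ functional while remaining \psd; these dual witnesses cannot be produced by a uniform template and must be engineered from the primal optimum.
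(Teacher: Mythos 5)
Your proposal follows essentially the same route as the paper: exhibit the explicit non-adaptive protocol in which Alice and Bob treat the $n$ qNLB outputs as one register and measure along a rotated basis whose angle tends to $\pi/6$ (giving $\cos^2\phi \to \tfrac{3}{4}$ and hence the asymptotic value $\tfrac{1}{2}(3\sqrt{3}+1)$), compute its CHSH-type value on the four inputs, and compare against the classical non-adaptive bound of $3$ from the parity-protocol optimality result of H{\o}yer--Rashid~\cite{Hoyer10}; that comparison alone proves the theorem, exactly as in the paper. One remark on the supplementary optimality step you include: it is not needed for this theorem (the statement only requires beating the \emph{classical} optimum, not quantum optimality), and as you formulate it---a primal over Kraus/Choi matrices of both parties' post-processing maps with a ``linear'' objective---it is not actually a semidefinite program, since the correlation value is bilinear in Alice's and Bob's operators; the paper circumvents this by applying Tsirelson's correspondence to replace observables by unit vectors and optimizing over their Gram matrix, which is what makes the primal (and its dual, used for the $n\leqslant 3$ optimality certificates) a genuine SDP.
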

We~prove our main theorem by setting up a
semidefinite programming framework~\cite{Wehner05d} for analyzing
non-adaptive protocols for \bfunc{qNLB} distillation.  We then use this
framework to define and give a protocol for \bfunc{qNLB} distillation and show
that it outperforms the optimal non-adaptive protocol for classical nonlocal boxes~\cite{Hoyer10}. We~show that our protocol is an optimal non-adaptive protocol for the class of correlated \bfunc{qNLB}s,
given~$1$,~$2$ and~$3$ copies by constructing a dual solution that attains the same value as the primal.

\section{Distillation protocols}

We define the \emph{value} of a nonlocal box as the sum of the biases
that the parity of the box agrees with the logical \textsf{and} of the
input bits, over all four possible inputs,
\begin{equation}\label{eq:NLBvalue}
V =  \sum_{x,y \in \zo} \textup{Pr}[ a \oplus b = x \cdot y]
   - \sum_{x,y \in \zo} \textup{Pr}[ a \oplus b \neq x \cdot y].
\end{equation}
Brunner and Skrzypczyk considered and analyzed in~\cite{Brunner09} a
class of \bfunc{NLB}s that has only one-sided errors and labelled them
correlated \bfunc{NLB}s.
\begin{defn} \label{def:nlb} 
A~\emph{correlated} \bfunc{NLB} maps the three inputs $00$, $01$ and $10$ to
the output $00$ with probability $\frac{1}{2}$, and to the output $11$
with complementary probability $\frac{1}{2}$.  It maps the input $11$
to either of the two outputs $01$ and $10$ with equal probabilities
$\frac{p}{2}$, and to either of the two outputs $00$ and $11$ with
equal probabilities $\frac{1-p}{2}=\frac{q}{2}$.  Here $p\in [0,1]$ denotes the probability that, on input $11$, the output of the \bfunc{NLB} is of odd parity. Similarly, $q=1-p$ denotes denotes the probability that, on input $11$, the output of the \bfunc{NLB} is of even parity.
\end{defn}
The value of a correlated box is $3+p-(1-p) = 2(1+p)$, and the value of a perfect \bfunc{NLB} is~$4$.

Consider now that Alice and Bob share $n$ instances of a correlated
nonlocal box, all with the same parameter~$p$.  Their goal is to
simulate the behaviour of a correlated nonlocal box with a better
parameter \mbox{$p'>p$} by using some pre-agreed upon protocol.  They
may use the $n$ \bfunc{NLB} instances as well as shared randomness, but are
not allowed to communicate.  If their protocol achieves a higher value
$p'$ than $p$, we call the protocol a \emph{distillation protocol}. A~distillation protocol using $n$ nonlocal boxes is said to be \emph{non-adaptive} if Alice is required to provide her input $x$ to
all $n$ nonlocal boxes and Bob is required to provide his input $y$ to
all $n$ boxes. 

A~correlated \bfunc{NLB} can be asymptotically distilled to a perfect \bfunc{NLB} by
an adaptive protocol~\cite{SWolf09} as follows.  Consider a single
execution of a correlated \bfunc{NLB} with input bits $x$ and~$y$ and output
bits $a$ and~$b$.  If the two inputs are both~$1$, a correlated \bfunc{NLB}
may output an incorrect correlation, whereas, if at least one of the
two inputs is~$0$, the output is always correctly of even parity.
Viewed from the perspective of the output bits $a$ and~$b$, if the
parity $a\oplus b$ is odd, we can conclude that the two inputs bit
were both~$1$, and that the output therefore is correct.  Only if the
output $a \oplus b$ is even, can we \emph{not} conclude with certainty
that the output is correct.  An~adaptive protocol can use this
one-sidedness of error to distill to the asymptotically optimal value
of~$4$ by patiently waiting till the first time a usage of the
correlated \bfunc{NLB} yields an output of odd parity.  This can be detected
distributively (but not locally), and once detected, the protocol
adaptively (and distributively) adjusts further usages of the \bfunc{NLB}s so
that all future outputs are of even parity.  The eventual distributive
detection of an output of odd parity reveals that the input bits were
both~$1$, and the lack of an output pair of odd parity indicates that
at least one of the two input bits were~$0$.  An odd parity output
will eventually occur, allowing us to asymptotically distill to the
optimal value~$4$.

A~non-adaptive protocol can in contrast not distill to the value~$4$.
By~not allowing for adaptiveness, the distributive detection of the
parity of the output can not be fed back into the system, and the
protocol then fails in taking full advantage of the knowledge it
possesses.  A~non-adaptive protocol must patiently wait till all
outputs are produced, at which stage its best strategy is to take the
parity of a certain number $k$ of its outputs~\cite{Hoyer10}.

\begin{theorem}[\cite{Hoyer10}] \label{thm:parityopt}
The value attainable by any non-adaptive protocol using at most $n$
correlated \bfunc{NLB}s is upper bounded by
\begin{equation}
V = \begin{cases} 
3 - (q-p)^n & \text{ if $0 \leqslant p < \frac{1}{2}$}\\
2(1+p) & \text{ if $\frac{1}{2} \leqslant p \leqslant 1$,}
\end{cases}
\end{equation}
and this value is attainable by the parity protocol.
\end{theorem}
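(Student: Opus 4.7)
My plan has two parts: achievability by exhibiting the parity protocol, and a matching upper bound via Fourier analysis on the Boolean hypercube. The latter exploits the fact that a correlated \bfunc{NLB} on input $(1,1)$ acts as a binary symmetric channel whose noise operator is diagonal in the Walsh basis.

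\emph{Achievability.} I would define the $k$-parity protocol ($1 \le k \le n$): Alice feeds $x$ into each of her $n$ boxes to obtain $a_1,\dots,a_n$ and outputs $A = a_1 \oplus \cdots \oplus a_k$; Bob does the symmetric thing, producing $B$. By Definition~\ref{def:nlb}, for $(x,y) \neq (1,1)$ every box returns $a_i = b_i$, so $A \oplus B = 0 = x \cdot y$ with probability one. On input $(1,1)$ the per-box errors $e_i := a_i \oplus b_i$ are i.i.d.\ Bernoulli$(p)$, so by the standard XOR formula $\Pr[A \oplus B = 1] = \frac{1}{2}(1 - (q-p)^k)$. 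Substituting into~\eqref{eq:NLBvalue} yields $V_k = 3 - (q-p)^k$, which is maximized at $k = n$ when $p < \frac{1}{2}$ (so $0 < q-p < 1$) and at $k = 1$ when $p \geq \frac{1}{2}$ (so $q - p \leq 0$), giving $V_1 = 2(1+p)$.

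\emph{Upper bound.} By convexifying over shared randomness, it suffices to bound deterministic protocols given by $f_x, g_y : \zo^n \to \zo$. Let $F_x = (-1)^{f_x}$, $G_y = (-1)^{g_y}$ and expand in the Walsh basis $\chi_S(z) = \prod_{i \in S}(-1)^{z_i}$. For $(x,y) \neq (1,1)$ the joint box outputs satisfy $b = a$ with $a$ uniform, giving $\mathbb{E}[F_x(a) G_y(b)] = \sum_S \hat F_x(S)\hat G_y(S)$. For $(1,1)$, $b = a \oplus e$ with $e \in \zo^n$ having i.i.d.\ Bernoulli$(p)$ entries; the noise acts diagonally in the Walsh basis with eigenvalues $(q-p)^{|S|}$, so $\mathbb{E}[F_1(a) G_1(b)] = \sum_S (q-p)^{|S|}\hat F_1(S)\hat G_1(S)$. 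Combining in~\eqref{eq:NLBvalue},
\begin{equation*}
V \;=\; \sum_S \Bigl[\hat F_0(S)\hat G_0(S) + \hat F_0(S)\hat G_1(S) + \hat F_1(S)\hat G_0(S) - (q-p)^{|S|}\hat F_1(S)\hat G_1(S)\Bigr],
\end{equation*}
subject to Parseval's constraints $\sum_S \hat F_x(S)^2 = \sum_S \hat G_y(S)^2 = 1$.

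\emph{Main obstacle.} The hard step is extracting the tight bound from this quadratic form: a naive term-by-term Cauchy--Schwarz yields only the trivial $V \leq 4$. My plan would be to argue that the maximizer concentrates all Fourier mass on a single Walsh character $\chi_T$, with $F_0 = F_1 = G_0 = G_1 = \chi_T$, reducing the formula to $V = 3 - (q-p)^{|T|}$ and leaving the discrete optimization over $|T| \in \{0, \ldots, n\}$ that reproduces the two cases stated. The delicate point is that asymmetric choices (with $F_0 \neq F_1$ or $G_0 \neq G_1$) could in principle trade off losses on the three ``easy'' contributions for gains on the $(1,1)$ contribution; ruling this out appears to require either an LP/SDP duality certificate or a direct convexity argument that uses the $\pm 1$-valued constraint on $F_x, G_y$ beyond mere Parseval normalization, so that the concentrated unit vector is itself realized by the Boolean parity function $\chi_T$.
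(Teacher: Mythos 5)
First, note that the paper itself does not prove Theorem~\ref{thm:parityopt}: it is imported from~\cite{Hoyer10}, so there is no internal proof to compare against. Judged on its own terms, your achievability half is fine: the one-sidedness of the error on inputs other than $(1,1)$, the XOR bias formula giving $V_k = 3-(q-p)^k$ for the $k$-parity protocol, and the discrete optimization over $k$ (take $k=n$ when $p<\frac{1}{2}$, $k=1$ when $p\geqslant\frac{1}{2}$, yielding $2(1+p)$) correctly reproduce the stated values.

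The upper bound, however, is where all the content of the theorem lies, and there you have a correct reformulation plus an acknowledged hole rather than a proof. The Fourier setup is sound (shared randomness convexifies away; on input $(1,1)$ the boxes act as product Bernoulli$(p)$ noise with Walsh eigenvalues $(q-p)^{\abs{S}}$), but the step you describe as your plan --- that the maximizer concentrates all Fourier mass on a single character with $F_0=F_1=G_0=G_1=\chi_T$ --- is essentially a restatement of the theorem (optimality of the parity protocol), not an argument for it. Moreover, the natural way to extract a bound from your quadratic form, namely relaxing the $\pm 1$-valued constraint to the Parseval normalization, provably cannot work: that relaxation is precisely the Tsirelson-type vector/SDP value, and this very paper shows it exceeds $3-(q-p)^n$ --- Protocol~\ref{protocolP} attains $(3+(q-p)^n)\cos(\phi)+\frac{1}{2}(1-(q-p)^n)$, asymptotically $\frac{1}{2}(3\sqrt{3}+1)>3$ (Lemma~\ref{lm:valueofprotocol}). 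So any correct argument must use Booleanness (equivalently, optimize over deterministic local strategies / the local polytope) in an essential, quantitative way, and ruling out the asymmetric strategies with $F_0\neq F_1$ or $G_0\neq G_1$ is exactly the part you defer to an unspecified ``LP/SDP duality certificate or convexity argument.'' As it stands, the bound $V\leqslant 3-(q-p)^n$ (resp.\ $2(1+p)$) is asserted, not proved, so the proposal has a genuine gap at the theorem's central claim.
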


In~this work, we consider the case that the \bfunc{NLB}s take quantum states
as input and produce quantum states as output. See Figure~\ref{fig:qnlb}.
\begin{defn}[Quantum nonlocal box]
\label{def:qnosigbox}
A~\emph{quantum nonlocal box} (\bfunc{qNLB}) $\mathcal{Q}$ takes as input a
product state $\ket{\psi_{xy}} \in
\{\ket{00},\ket{01},\ket{10},\ket{11}\} \in \mathcal{H}_\party{A}
\otimes \mathcal{H}_{\party{B}}$ and outputs a state \mbox{$\rho_{xy}
  \in \mathcal{H}_\party{A} \otimes \mathcal{H}_\party{B}$} such that
for every map \mbox{$\Gamma_{\party{A}}: \mathcal{H}_\party{A} \mapsto
  \mathcal{H}_\party{A}$} and \mbox{$\Gamma_{\party{B}}:
  \mathcal{H}_\party{B} \mapsto \mathcal{H}_\party{B}$} the following
two no-signalling conditions hold,
\begin{align*}
\Tr_\party{A}  \mathcal{Q} 
\big( \left(\Gamma_{\party{A}} \otimes \id \right) 
\ket{\psi_{xy}}\bra{\psi_{xy}} \big)
 &= \Tr_\party{A} \mathcal{Q} \ket{\psi_{xy}}\bra{\psi_{xy}} \\
\Tr_\party{B} \mathcal{Q} 
\big( \left(\id \otimes \Gamma_{\party{B}} \right) 
\ket{\psi_{xy}}\bra{\psi_{xy}} \big)
 &= \Tr_\party{B} \mathcal{Q} \ket{\psi_{xy}}\bra{\psi_{xy}}.
\end{align*}
\end{defn}

\begin{figure}
\centering
\scalebox{1}{\includegraphics{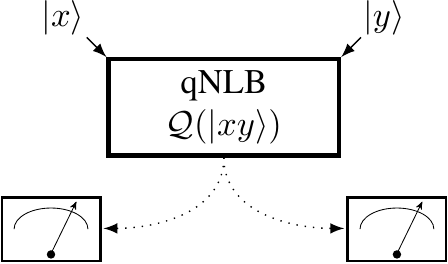}}
\caption[figure1]{A quantum nonlocal box}
\label{fig:qnlb}
\end{figure}
In~particular, we consider the class of correlated \bfunc{qNLB} that
generalizes the class of correlated \bfunc{NLB}s.
\begin{defn} \label{def:qnlb} 
A~\emph{correlated} \bfunc{qNLB} maps the three inputs $\ket{00}$, $\ket{01}$
and $\ket{10}$ to the pure state
$\ket{\psi} = \frac{1}{\sqrt{2}}\big(\ket{00} + \ket{11} \big)$,
and maps the input $\ket{11}$ to the mixed state
$\rho = p\ket{\phi}\bra{\phi} + q \ket{\psi}\bra{\psi}$,
where $\ket{\phi} =\frac{1}{\sqrt{2}}\left(\ket{01} + \ket{10}
\right)$ is a superposition over the two odd-parity states, $p \in
      [0,1]$ a probability, and $q = 1-p$ the complementary
      probability.
\end{defn}

Given that Alice and Bob share~$n$ copies of a correlated \bfunc{qNLB} and
measure observables~$\op{A}_x$ and~$\op{B}_y$ with eigenvalues $\pm1$
for input bits~$x$ and~$y$, respectively, the value attained for the
CHSH inequality~\cite{CHSH69}~is
\begin{equation}
\label{v:qnlb}
V = \bra{\psi}^{\otimes n}( \op{A}_0 \otimes \op{B}_0 + \op{A}_0 \otimes \op{B}_1
+ \op{A}_1 \otimes \op{B}_0) \ket{\psi}^{\otimes n} - \Tr(\op{A}_1 \otimes
\op{B}_1 \rho^{\otimes n}).
\end{equation}
A~\bfunc{qNLB} is at least as powerful as an \bfunc{NLB}: For any value of $p$, Alice
and Bob can use a correlated \bfunc{qNLB} to simulate the correlation of a
correlated \bfunc{NLB} by simply measuring each of their outputs in the
computational basis.  In this paper, we formally prove that \bfunc{qNLB}s are
strictly more powerful in extracting nonlocality than are \bfunc{NLB}s. See Figure~\ref{fig:qnlbprotocol} for the structure of non-adaptive distillation protocol for \bfunc{qNLB}s. We~establish our main Theorem~\ref{thm:mainthm} by giving an explicit non-adaptive protocol that attains a higher distilled value for correlated \bfunc{qNLB}s than the optimal parity protocol attains for correlated \bfunc{NLB}s.  The amount of distillability achievable by non-adaptive protocols for \bfunc{NLB}s is characterized in~\cite{Hoyer10}, here specialized to correlated \bfunc{NLB}s as Theorem~\ref{thm:parityopt} above.

\begin{figure}
\centering
\scalebox{1}{\includegraphics{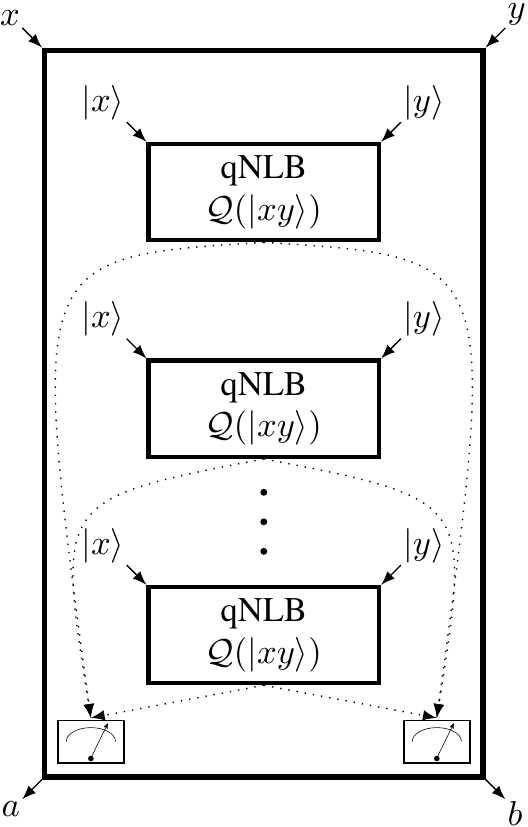}}
\caption[figure1]{A non-adaptive distillation protocol for \bfunc{qNLB}s.}
\label{fig:qnlbprotocol}
\end{figure}

\begin{table}
\begin{equation*}
\begin{array}{c|c|c}
& \text{\bfunc{NLB} distill?} 
& \text{Value}\\\hline
p=0 
    & \text{no} 
    & 2 \\
0<p<\frac{1}{2} 
    & \text{yes} 
    & 3-(q-p)^n \\
\frac{1}{2} \leqslant p \leqslant 1 
    & \text{no}
    & 2(1+p) \\
\end{array}
\end{equation*}
\caption{Non-adaptive distillation of correlated \bfunc{NLB}s is possible if
  and only if $0 < p < \frac{1}{2}$, for which they can be
  asymptotically distilled to the value~3.}
\label{table:NLBdistillsummary}
\end{table}

\begin{table}
\begin{equation*}
\begin{array}{c|c|c}
& \text{\bfunc{qNLB} distill?} 
& \text{Value} \\\hline
p=0 
    & \text{no} 
    & 2 \\
0<p<\frac{1}{2} 
    & \text{yes} 
    & (3+(q-p)^n)\cos(\phi) +\frac{1}{2}(1-(q-p)^n) \\
p=\frac{1}{2} 
    & \text{no for $n\leqslant 3$} 
    & \frac{1}{2}(3\sqrt{3}+1) \\
\frac{1}{2}<p < \frac{2}{3} 
    & \text{no for $n\leqslant 3$} 
    & 3 \cos(\phi) - q\cos(3\phi) + p \\
\frac{2}{3} \leqslant p < 1 
    & \text{no for $n\leqslant 3$} 
    & 2(1+p)\\
p=1
    & \text{no}
    & 4
\end{array}
\end{equation*}
\caption{Non-adaptive distillation of correlated \bfunc{qNLB}s is possible
  when $0 < p < \frac{1}{2}$, for which they can be asymptotically
  distilled to the value~$\frac{1}{2}(3\sqrt{3}+1) \approx 3.098076$.
  When $\frac{1}{2} \leqslant p < 1$, non-adaptive distillation is not
  possible using at most 3 \bfunc{qNLB}s.  Measurement angle~$\phi$ depends
  on~$p$ and is defined in Eq.~\ref{eq:phi}.}
\label{table:qNLBdistillsummary}
\end{table}

We summarize the known results on non-adaptive distillation of
classical and quantum correlated nonlocal boxes in Tables
\ref{table:NLBdistillsummary} and~\ref{table:qNLBdistillsummary}.  For
\bfunc{NLB}s, we have complete knowledge: correlated \bfunc{NLB}s are non-adaptively
distillable if and only if $0<p<\frac{1}{2}$, and the parity protocol
of Forster \emph{et al.}~\cite{Forster11} is an optimal non-adaptive
protocol~\cite{Hoyer10}.  For \bfunc{qNLB}s, we show here that correlated
\bfunc{qNLB}s are non-adaptively distillable when $0<p<\frac{1}{2}$, and that
correlated \bfunc{qNLB}s can \emph{not} be non-adaptively distilled when
$\frac{1}{2} \leqslant p < 1$ if we Alice and Bob are allowed to use
at most 3 \bfunc{qNLB}s.  When $\frac{1}{2} \leqslant p < 1$, we show that the
single-usage \bfunc{qNLB} protocol of Piani \emph{et al.}~\cite{Piani06} is
optimal among all non-adaptive protocols using at most 3 \bfunc{qNLB}s.

The values attainable are plotted in Figure~\ref{fig:distillsummary}.
When $0 < p < \frac{2}{3}$, \bfunc{qNLB}s achieves a strictly larger value
than \bfunc{NLB}s for any fixed value of~$n$.  For $0 < p < \frac{1}{2}$,
\bfunc{qNLB}s can be asymptotically distilled to $\frac{1}{2}(3\sqrt{3}+1)
\approx 3.098076$, whereas \bfunc{NLB}s can only be asymptotically distilled to
the value~3.

\begin{figure}
\centering
\scalebox{.9}{\includegraphics{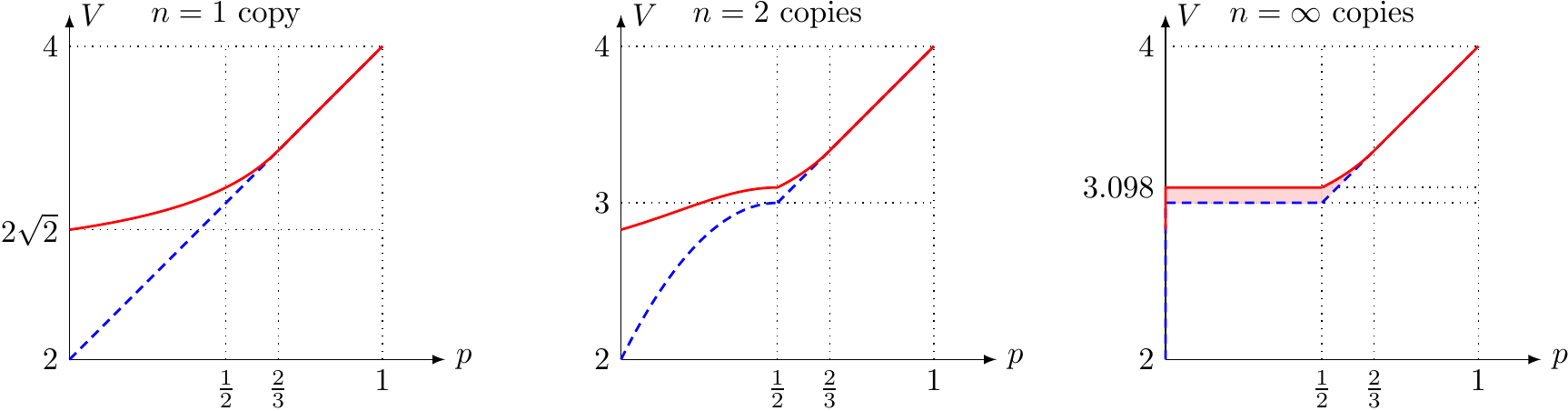}}
\caption{Value attained by our Protocol~$\mathcal P$ for correlated \bfunc{qNLB}s (solid line) and the parity protocol for \bfunc{NLB}s (dotted lines)~\cite{Forster11}. For $0 < p < \frac{1}{2}$, parity distills to $3.0$, while our protocol distills to $\frac{1}{2}\Big(3\sqrt{3}+1 \Big) \approx 3.098$. For $\frac{1}{2}  \leqslant  p  <  \frac{2}{3}$, even though \bfunc{qNLB}s attain a higher value than \bfunc{NLB}s, no distillation occurs. Finally, for $\frac{2}{3}  \leqslant  p  \leqslant  1$, both the protocols attain the same value without any distillation taking place.}
\label{fig:distillsummary}
\end{figure}

\section{Our distillation protocol}
\label{sec:theprotocol}
We propose the following protocol~\ref{protocolP} for non-adaptively
distilling correlated \bfunc{qNLB}s.

\begin{protocol}
\label{protocolP}
Let Alice and Bob share~$n$ identical copies of a correlated \bfunc{qNLB} of
parameter $p$ and let them receive input bits~$x$ and~$y$,
respectively. Their observables~$\op{A}_x$ and~$\op{B}_y$ are given~by
\begin{align*} 
\op{A}_x &= \cos \left(\frac{\phi}{2} +x\phi\right)\op{Z}
     +(-1)^x\sin \left(\frac{\phi}{2} +x\phi\right)\op{X}\\
\op{B}_y &= \cos \left(\frac{\phi}{2} +y\phi\right)\op{Z}
     -(-1)^y\sin \left(\frac{\phi}{2}+y\phi \right)\op{X}.
\end{align*}
The operators~$\op{Z}$ and~$\op{X}$ for the observables~$\op{A}_x$
and~$\op{B}_y$ are chosen based on the value of~$p$,
\begin{equation}
\label{eq:observables}
\op{Z} = \left\{\begin{array}{ll}
\sigma_\op{z}^{\otimes n} 
   & \text{ if } 0 < p < \frac{1}{2} \\
\sigma_\op{z} \otimes \id^{\otimes n-1} 
   & \text{ if } \frac{1}{2} \leqslant p \leqslant 1 
\end{array} \right. 
\quad \text{and} \quad
\op{X} = \left\{\begin{array}{ll}
\sigma_\op{x}^{\otimes n} 
     & \text{ if } 0 < p < \frac{1}{2} \\
\sigma_\op{x} \otimes \id^{\otimes n-1} 
     & \text{ if } \frac{1}{2} \leqslant p \leqslant 1.
\end{array} \right.
\end{equation}
The measurement angle~$\phi$ depends on~$p$ and is chosen such that it
maximizes the value attained for the CHSH inequality,
\begin{equation}
\label{eq:phi}
\cos^2 \left(\phi(p)\right) = \left\{ 
\begin{array}{ll}
\frac{1}{4}\left( \frac{3+(q-p)^n}{1+(q-p)^n} \right) 
  & \text{ if } 0 < p < \frac{1}{2} \\
\frac{1+q}{4q} 
  & \text{ if } \frac{1}{2} \leqslant p < \frac{2}{3}\\
1 & \text{ if } \frac{2}{3} \leqslant p \leqslant 1.\\
\end{array} \right.
\end{equation}
\end{protocol}

The observables chosen by Alice and Bob in Eq.~\ref{eq:observables} in
Protocol~\ref{protocolP} depends on the probability~$p$.  If~$0 < p <
\frac{1}{2}$, Alice and Bob non-trivially use all $n$ available \bfunc{qNLB}s.
They view those $n$ \bfunc{qNLB}s as a single \bfunc{qNLB} and each applies an
observable given by global measurement angle in a two dimensional
space spanned by the two observables $\sigma_\op{z}^{\otimes n}$ and
$\sigma_\op{x}^{\otimes n}$ (see Eq.~\ref{eq:observables}).  Thus viewed as
a two-dimensional rotation, our chosen observables can be seen as a
generalization of the measurements in the protocol of Piani et al.~\cite{Piani06} for a single \bfunc{qNLB}.

If~$\frac{1}{2} \leqslant p \leqslant 1$, Alice and Bob effectively
choose to use only a single \bfunc{qNLB} by applying the identity observable
$\id = \left(\begin{smallmatrix}1 & 0 \\ 0 &1\end{smallmatrix}\right)$
  on all but the first \bfunc{qNLB}.  The output bits $a$ and $b$ of Alice and
  Bob depend only on the output bits of the first \bfunc{qNLB}.  Therefore, an
  alternative protocol achiving the same values as ours, can be
  constructed in which Alice and Bob first choose to use a number $k$
  of \bfunc{qNLB}s, discard the remaining $n-k$ \bfunc{qNLB}s, and then each apply an
  observable on the $k$ selected \bfunc{qNLB}s as in Protocol~\ref{protocolP}.
  When $0 < p < \frac{1}{2}$, they pick \mbox{$k=n$}, and when $p
  \geqslant \frac{1}{2}$, they pick \mbox{$k=1$}.

Having specified the four observables $\op{A}_0$, $\op{A}_1$,
$\op{B}_0$, and $\op{B}_1$, we compute the value attained by our
Protocol~\ref{protocolP} by plugging into Eq.~\ref{v:qnlb}.
\begin{lemma} 
\label{lm:valueofprotocol}
Protocol~\ref{protocolP} attains the value
\begin{equation}
\label{eq:valueofprotocolP}
V =
\begin{cases}
(3+(q-p)^n)\cos(\phi) +\frac{1}{2}(1-(q-p)^n)
& \text{ if $0 < p < \frac{1}{2}$}\\
2(1+q)\cos(\phi) +p 
& \text{ if $\frac{1}{2} \leqslant p < \frac{2}{3}$}\\
2(1+p) 
& \text{ if $\frac{2}{3} \leqslant p \leqslant 1$.}
\end{cases}
\end{equation}
\end{lemma}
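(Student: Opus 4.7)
The plan is to expand each product $\op{A}_x \otimes \op{B}_y$ into the four bilinear operators $\op{Z}\otimes\op{Z}$, $\op{Z}\otimes\op{X}$, $\op{X}\otimes\op{Z}$, $\op{X}\otimes\op{X}$, evaluate the four expectations against the relevant state, and then substitute the angle $\phi$ from Eq.~\ref{eq:phi} into Eq.~\ref{v:qnlb}.

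The evaluation of the single-bilinear expectations reduces to a product of single-qNLB traces once the tensor factors are reordered so that each copy of $\ket{\psi}$ (respectively~$\rho$) is shared between one matched pair of Alice/Bob qubits. A~direct calculation on a single pair of qubits gives $\bra{\psi}(\sigma_\op{z}\otimes\sigma_\op{z})\ket{\psi} = \bra{\psi}(\sigma_\op{x}\otimes\sigma_\op{x})\ket{\psi} = 1$, $\Tr((\sigma_\op{z}\otimes\sigma_\op{z})\rho) = q-p$, $\Tr((\sigma_\op{x}\otimes\sigma_\op{x})\rho) = 1$, and all four mixed $\sigma_\op{z}\otimes\sigma_\op{x}$ and $\sigma_\op{x}\otimes\sigma_\op{z}$ expectations vanish on both $\ket{\psi}$ and $\rho$. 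When $0 < p < \frac{1}{2}$, the operators $\op{Z}$ and $\op{X}$ are $\sigma_\op{z}^{\otimes n}$ and $\sigma_\op{x}^{\otimes n}$, so each trace factors as the $n$-th power of a single-qNLB trace, yielding the $\op{Z}\otimes\op{Z}$ expectation on $\rho^{\otimes n}$ equal to $(q-p)^n$. When $\frac{1}{2}\leqslant p\leqslant 1$, only the first qNLB is acted on nontrivially, so the remaining $n-1$ factors contribute $\Tr(\rho) = 1$ each, replacing $(q-p)^n$ by $(q-p)$ throughout.

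Writing $\alpha_x = \phi/2 + x\phi$ and letting $r$ denote either $(q-p)^n$ or $(q-p)$ as appropriate, the expansion together with the angle-addition identities yields $\bra{\psi}^{\otimes n}(\op{A}_x\otimes\op{B}_y)\ket{\psi}^{\otimes n} = \cos\phi$ for each $(x,y) \in \{(0,0),(0,1),(1,0)\}$, and $\Tr\bigl((\op{A}_1\otimes\op{B}_1)\rho^{\otimes n}\bigr) = \cos^2(3\phi/2)\,r - \sin^2(3\phi/2)$. Applying the half-angle identities transforms Eq.~\ref{v:qnlb} into
\begin{equation*}
V = 3\cos\phi + \tfrac{1}{2}(1-r) - \tfrac{1}{2}\cos(3\phi)(1+r).
\end{equation*}

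The last step is to substitute the prescribed angle. In both nontrivial regimes, Eq.~\ref{eq:phi} can be rewritten as $\cos^2\phi\,(1+r) = (3+r)/4$. Combined with the triple-angle identity $\cos 3\phi = 4\cos^3\phi - 3\cos\phi$, this collapses the cubic term $2\cos^3\phi(1+r)$ to $\tfrac{1}{2}\cos\phi(3+r)$, and a brief rearrangement produces $V = (3+r)\cos\phi + \tfrac{1}{2}(1-r)$. Unpacking $r$ yields the first two cases of Eq.~\ref{eq:valueofprotocolP}, using the identities $3+q-p = 2(1+q)$ and $1-(q-p) = 2p$ for the case $\frac{1}{2}\leqslant p < \frac{2}{3}$. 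The boundary case $\frac{2}{3}\leqslant p\leqslant 1$ corresponds to $\cos\phi = 1$, and plugging in directly gives $V = 3 - (q-p) = 2(1+p)$. The only delicate step is the algebraic collapse of the cubic term, driven entirely by the form of $\cos^2\phi$ chosen in Eq.~\ref{eq:phi}; I do not anticipate any other subtleties.
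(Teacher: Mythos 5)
Your proposal is correct and follows essentially the same route as the paper's proof: evaluate the four Pauli bilinears on $\ket{\psi}^{\otimes n}$ and $\rho^{\otimes n}$ (the paper's Lemma~\ref{lm:ncpylemma}), obtain $3\cos\phi$ for the even inputs and $\cos^2(3\phi/2)\,r-\sin^2(3\phi/2)$ for input $11$ with $r=(q-p)^n$ or $q-p$, and then eliminate $\cos(3\phi)$ using the choice of $\phi$ in Eq.~\ref{eq:phi} (your triple-angle substitution is exactly the paper's Eq.~\ref{eq:phi2} in disguise).
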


A~complete proof of Lemma~\ref{lm:valueofprotocol} is given in
Appendix~\ref{appendix:protocolvalue}.  In~the next two sections, we
prove that our protocol is an optimal non-adaptive protocol for any
$0<p\leqslant 1$ and any \mbox{$n \leqslant 3$}.

\section{Protocol~\ref{protocolP} is optimal for a single copy}
\label{sec:singlecopy}

We~now show that no other protocol can achive a higher value $V$ than
ours when Alice and Bob are given a single \bfunc{qNLB}.  When $n=1$, the
expression for the value $V$ given in Eq.~\ref{v:qnlb} simplifies to
\begin{equation*}
  \tinner{\psi}{\op{A}_0 \otimes \op{B}_0}{\psi}
+ \tinner{\psi}{\op{A}_0 \otimes \op{B}_1}{\psi}
+ \tinner{\psi}{\op{A}_1 \otimes \op{B}_0}{\psi}
- p \tinner{\phi}{\op{A}_1 \otimes \op{B}_1}{\phi}
- q \tinner{\psi}{\op{A}_1 \otimes \op{B}_1}{\psi}.
\end{equation*}
Using that the two Bell states $\ket{\psi}$ and $\ket{\phi}$ (given in
Definition~\ref{def:qnlb}) can be locally mapped to each other,
$\ket{\psi} = (\id \otimes \sigma_\op{x})\ket{\phi}$,
we rewrite the optimization problem in terms of a single state
$\ket{\psi}$,
\begin{equation}
\label{eq:optsinglecopy}
V = 
  \tinner{\psi}{\op{A}_0 \otimes \op{B}_0
+   {\op{A}_0 \otimes \op{B}_1}
+   {\op{A}_1 \otimes \op{B}_0}
- p (\op{A}_1 \otimes \sigma_\op{x}\op{B}_1 \sigma_\op{x})
- q (\op{A}_1 \otimes \op{B}_1)}{\psi}, 
\end{equation}
allowing us to apply Csirelson's conversion between observables and
vectors~\cite{Cirel87, Cirel93}, as done in Wehner~\cite{Wehner05d}.
\begin{lemma}[Tsirelson~\cite{Cirel87, Cirel93}]
\label{lm:tsirelson}
Let $\op{A}_0, \ldots, \op{A}_{m-1}$ and $\op{B}_0, \ldots,
\op{B}_{m-1}$ be observables with eigenvalues in the interval $[-1,
  1]$.  Then for any state $\ket{\psi}$ shared between Alice and Bob,
there exist real unit vectors $x_0, \ldots, x_{m-1}$ and $y_0, \ldots,
y_{m-1}$ such that
\begin{equation}
\label{eq:tsirelsonconversion}
\tinner{\psi}{\op{A}_i \otimes \op{B}_j}{\psi} = x_i \cdot y_j
\end{equation}
for all $0 \leqslant i, j < m$.  Conversely, for any set of real unit
vectors, $x_0, \ldots, x_{m-1}$ and $y_0, \ldots, y_{m-1}$, and any
maximally entangled state $\ket{\psi}$, there exist observables
$\op{A}_i$ and $\op{B}_j$ with eigenvalues $\pm 1$ such that
Eq.~\ref{eq:tsirelsonconversion} holds for all $0 \leqslant i, j < m$.
\end{lemma}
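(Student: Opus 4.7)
The lemma splits into two implications, with essentially different techniques for each direction.

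For the forward direction, I~would introduce the auxiliary complex vectors $u_i = (\op{A}_i \otimes \id)\ket{\psi}$ and $v_j = (\id \otimes \op{B}_j)\ket{\psi}$ in $\mathcal{H}_\party{A} \otimes \mathcal{H}_\party{B}$. Hermiticity of $\op{A}_i$ gives $\tinner{\psi}{\op{A}_i \otimes \op{B}_j}{\psi} = \langle u_i, v_j \rangle$, and since the left-hand side is real this coincides with $\mathrm{Re}\langle u_i, v_j \rangle$. Realifying the complex Hilbert space by $w \mapsto (\mathrm{Re}\, w, \mathrm{Im}\, w)$ turns the real part of a complex inner product into an ordinary real inner product, producing real vectors $\tilde u_i$ and $\tilde v_j$ with $\tilde u_i \cdot \tilde v_j = \tinner{\psi}{\op{A}_i \otimes \op{B}_j}{\psi}$. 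Since the eigenvalue bound forces $\op{A}_i^2 \preceq \id$, these have norm at most one, so I~would pad each $\tilde u_i$ in its own private Alice-side dimension with coordinate $\sqrt{1 - \|\tilde u_i\|^2}$ and each $\tilde v_j$ in its own Bob-side dimension, chosen disjoint from all Alice pads. All pairwise Alice--Bob inner products are preserved exactly, and every resulting vector has unit norm.

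For the converse, I~would invoke the Clifford algebra. On $\mathbb{C}^{2^m}$ build $2m$ Hermitian operators $\Gamma_1, \ldots, \Gamma_{2m}$ satisfying $\Gamma_k \Gamma_\ell + \Gamma_\ell \Gamma_k = 2\delta_{k\ell}\id$, using the standard Jordan--Wigner inductive construction from tensor products of Pauli matrices; each $\Gamma_k$ is a Hermitian involution and hence has eigenvalues $\pm 1$. For real unit vectors $x_i, y_j \in \mathbb{R}^{2m}$ and the maximally entangled state $\ket{\psi} = \tfrac{1}{\sqrt d} \sum_k \ket{kk}$ with $d = 2^m$, set $\op{A}_i = \sum_k x_i^k \Gamma_k$ and $\op{B}_j = \sum_k y_j^k \Gamma_k^T$. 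The Clifford relations together with $\|x_i\| = 1$ give $\op{A}_i^2 = \id$, so $\op{A}_i$ has eigenvalues $\pm 1$, and the same holds for $\op{B}_j$. Applying the maximally-entangled identity $\tinner{\psi}{M \otimes N}{\psi} = \tfrac{1}{d}\Tr(MN^T)$ together with $\Tr(\Gamma_k \Gamma_\ell) = d\,\delta_{k\ell}$ then yields $\tinner{\psi}{\op{A}_i \otimes \op{B}_j}{\psi} = \sum_k x_i^k y_j^k = x_i \cdot y_j$.

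The main obstacle is the converse. The forward direction is essentially formal bookkeeping once one recognises the realification and padding tricks. The converse, in contrast, requires the nontrivial existence of a representation of the real Clifford algebra of sufficient rank: one must select the Hilbert-space dimension large enough (taking $d = 2^{\lceil m/2 \rceil}$ handles any $m$ vectors by embedding $\mathbb{R}^m$ into the next even dimension) and handle the transpose on Bob's side carefully so that $\op{B}_j$ remains Hermitian with $\pm 1$ eigenvalues while the trace identity on the maximally entangled state still delivers the prescribed real inner product.
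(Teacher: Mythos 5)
The paper never proves this lemma: it is imported verbatim from Tsirelson's work (the citations \cite{Cirel87, Cirel93}, via Wehner \cite{Wehner05d}), so there is no in-paper argument to compare against. Your proposal is essentially the standard proof from that literature, and it is sound. The forward direction via $u_i = (\op{A}_i \otimes \id)\ket{\psi}$, $v_j = (\id \otimes \op{B}_j)\ket{\psi}$, realification, and private padding coordinates is exactly the right bookkeeping, and it correctly uses only $\op{A}_i^2 \preccurlyeq \id$, which is all the hypothesis (eigenvalues in $[-1,1]$) gives. The converse via anticommuting Hermitian involutions $\Gamma_k$, $\op{A}_i = \sum_k x_i^k \Gamma_k$, $\op{B}_j = \sum_k y_j^k \Gamma_k^T$, and the identity $\tinner{\psi}{M \otimes N}{\psi} = \frac{1}{d}\Tr(MN^T)$ is also the standard construction and the trace computation is correct.

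Two small points deserve tightening. First, your dimension counts are inconsistent: the $2m$ vectors $x_0,\ldots,x_{m-1},y_0,\ldots,y_{m-1}$ may jointly span a space of dimension up to $2m$, so in general you need $2m$ anticommuting generators, i.e.\ local dimension $2^m$ as in your main construction; the parenthetical claim that $d = 2^{\lceil m/2 \rceil}$ "handles any $m$ vectors" only covers the case where all the vectors lie in an $m$-dimensional space, and should be dropped or qualified. Second, the lemma as stated asks for the correspondence to hold for \emph{any} maximally entangled state, whereas you construct it only for the canonical one $\frac{1}{\sqrt d}\sum_k \ket{kk}$; you should add the one-line remark that any maximally entangled state of local dimension $d$ has the form $(\id \otimes U)\frac{1}{\sqrt d}\sum_k \ket{kk}$ for a unitary $U$, which can be absorbed by replacing $\op{B}_j$ with $U \op{B}_j U^\dagger$ (and note that a dimension restriction, e.g.\ $d$ large enough and even, is implicitly needed, since an odd-dimensional space admits no pair of anticommuting involutions --- a looseness in the paper's statement rather than in your argument).
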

We define five vectors, one vector for each of Alice's two observables
$\op{A}_0$ and $\op{A}_1$, one for Bob's observable $\op{B}_0$, and
\emph{two} vectors for Bob's observable $\op{B}_1$,
\begin{align*}
x_0 &= (\op{A}_0 \otimes \id) \ket{\psi} &
y_0 &= (\id \otimes \op{B}_0) \ket{\psi} &
z_0 &= (\id \otimes \op{B}_1) \ket{\psi}\\
x_1 &= (\op{A}_1 \otimes \id) \ket{\psi} &
&& z_1 &= (\id \otimes (\sigma_\op{x} \op{B}_1 \sigma_\op{x})) \ket{\psi}.
\end{align*}
\begin{figure}
\centering
\scalebox{1}{\includegraphics{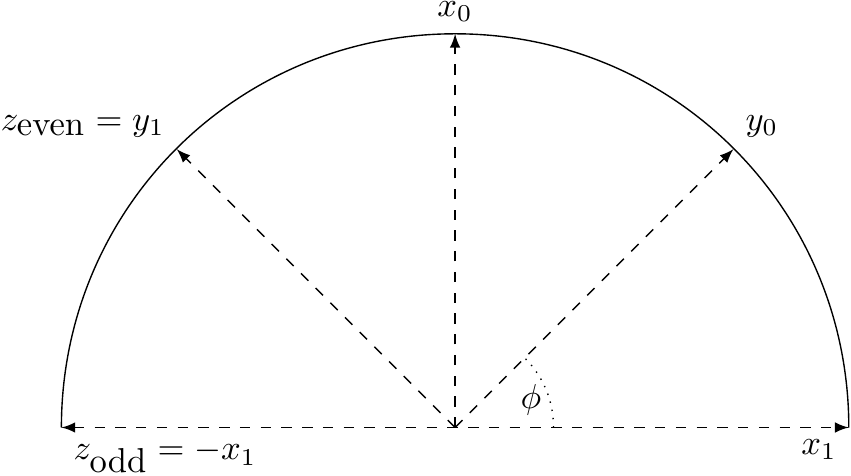}}
\caption{Geometric intuition for the choice of angle $\phi$, when \mbox{$0 < p \leqslant \frac{2}{3}$}. The choice determines the direction of measurements performed by Alice and Bob and how the vectors we obtain through Csirelson's vectorization relate to each other. For the single copy case $z_{\textrm{even}}$ and $z_{\textrm{odd}}$ correspond to the individual vectors $z_0$ and $z_1$ respectively. In the multiple copy case $z_{\textrm{even}}$ and $z_{\textrm{odd}}$ correspong to multiple vectors.}
\label{fig:qnlbfig}
\end{figure}
Let $G= [g_{ij}]$ be the Gram Matrix of the five vectors
$\{x_0,x_1,y_0,z_0,z_1\},$
\begin{equation*} 
G = 
\left(\begin{array}{ccccc}
x_0 \cdot x_0 & x_0 \cdot x_1 & x_0 \cdot y_0 & x_0 \cdot z_0 & x_0 \cdot z_1  \\
x_1 \cdot x_0 & x_1 \cdot x_1 & x_1 \cdot y_0 & x_1 \cdot z_0 & x_1 \cdot z_1 \\
y_0 \cdot x_0 & y_0 \cdot x_1 & y_0 \cdot y_0 & y_0 \cdot z_0 & y_0 \cdot z_1 \\
z_0 \cdot x_0 & z_0 \cdot x_1 & z_0 \cdot y_0 & z_0 \cdot z_0 & z_0 \cdot z_1 \\
z_1 \cdot x_0 & z_1 \cdot x_1 & z_1 \cdot y_0 & z_1 \cdot z_0 & z_1 \cdot z_1 
\end{array}\right),
\end{equation*}
and set $W$ to be the weight matrix
\begin{equation*}
W=
\begin{pmatrix}
 0 & \hphantom{-}0 & 1 & \hphantom{-}1 & \hphantom{-}0 \\
 0 & \hphantom{-}0 & 1 &-q &-p \\
 1 & \hphantom{-}1 & 0 & \hphantom{-}0 & \hphantom{-}0 \\
 1 &-q & 0 & \hphantom{-}0 & \hphantom{-}0 \\
 0 &-p & 0 & \hphantom{-}0 & \hphantom{-}0
\end{pmatrix}.
\end{equation*}
Finding an upper bound on the value~$V$ in Eq.~\ref{eq:optsinglecopy}
then becomes equivalent to finding an upper bound on the primal value
of the semidefinite program (SDP)
\begin{equation}
 \label{eq:sdp1}
\begin{split}
&\max\limits_{G} \frac{1}{2}\Tr(GW) \\
&\begin{split}
\text{ subject to } & G \succcurlyeq 0\\
& g_{ii} = 1  \text{ for all } i \in \{1,\ldots,5\}. 
\end{split}
\end{split}
\end{equation}
The constraint $G \succcurlyeq 0$ ensures that $G$ is a Gram matrix,
and the constraints that the diagonal entries of~$G$ are equal to~1,
ensure that the five vectors are of unit norm.  {From} any valid
solution to the primal, we can extract a set of \emph{five}
observables via Tsirelson's correspondence and construct a protocol
that has the same value as the primal solution, and vice-versa, from
any protocol, we can extract a set of five vectors, the Gram matrix of
which is a primal solution having the same value as the value attained
by the protocol.

We prove our upper bound on the primal value by giving a feasible
solution to the dual of value equal to the value $V$ in
Eq.~\ref{eq:valueofprotocolP}.  To~conclude that our dual solution is
feasible, we need to show that a particular matrix $\dualm$ is
positive semidefinite.  Rather than attempting conveying a technical
analysis of the roots of the matrix $\dualm$'s characteristic
polynomial, we shall instead break the matrix~$\dualm$ into smaller
parts and repeatedly apply the following simple observation about the
eigenvalues of a matrix of dimension \mbox{$2 \times 2$}.

\begin{observation}\label{obs:psd2by2}
A real-valued $2 \times 2$ matrix is \psd{} if and only if it has a
non-negative diagonal entry and its determinant is non-negative.
\end{observation}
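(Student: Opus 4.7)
The plan is to reduce positive semidefiniteness of a real symmetric $2 \times 2$ matrix $M = \bigl(\begin{smallmatrix}a & b \\ b & d\end{smallmatrix}\bigr)$ to non-negativity of its two eigenvalues, and then recast the signs of the eigenvalues in terms of the trace $\mathrm{tr}(M) = a+d$ and the determinant $\det(M) = ad - b^2$, which are, respectively, the sum and product of the eigenvalues.

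The ``only if'' direction is immediate: if $M$ is PSD, then each diagonal entry $M_{ii} = e_i^{T} M e_i$ is non-negative (so in particular $M$ has a non-negative diagonal entry), and the determinant, being the product of two non-negative eigenvalues, is itself non-negative.

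For the ``if'' direction, I would argue as follows. Using the standard fact that a real symmetric matrix is PSD iff all of its eigenvalues are non-negative, it suffices to show $\lambda_1,\lambda_2 \geq 0$. The assumption $\det(M) = \lambda_1 \lambda_2 \geq 0$ says the two eigenvalues have the same sign or at least one is zero. To rule out the possibility that both are strictly negative, it is enough to verify $\mathrm{tr}(M) \geq 0$. From $ad - b^2 \geq 0$ we get $ad \geq b^2 \geq 0$; combined with the assumed non-negative diagonal entry (say $a \geq 0$), we conclude $d \geq 0$ in the non-degenerate case $a > 0$, while the degenerate case $a=0$ forces $b=0$ and hence $M$ reduces to $\mathrm{diag}(0,d)$ with $d$ read off directly. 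In either case $\mathrm{tr}(M) = a+d \geq 0$, so both eigenvalues are non-negative and $M$ is PSD.

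There is no serious obstacle here: the argument is a one-line appeal to the eigenvalue characterization of PSD, dressed up through the elementary identities relating trace and determinant to eigenvalues. The only micro-subtlety is the degenerate case in the ``if'' direction, which I would dispatch by the observation that $b^2 \leq ad = 0$ collapses $M$ into a diagonal matrix whose remaining entry is then directly visible. This reformulation is precisely what the subsequent argument needs, since checking positivity of a diagonal block together with a non-negative determinant avoids having to compute the characteristic polynomial of the full matrix $\dualm$.
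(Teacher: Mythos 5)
Your approach is the same as the paper's: restrict to a real symmetric $2\times 2$ matrix, use the fact that positive semidefiniteness is equivalent to both eigenvalues being non-negative, and translate this via the trace (sum of eigenvalues, sum of diagonal entries) and determinant (product of eigenvalues). That is precisely the one-line argument the paper gives after the Observation, just written out in more detail. The ``only if'' direction and the non-degenerate ``if'' case are fine: if $a>0$ and $ad\geqslant b^2\geqslant 0$ then $d\geqslant 0$, so the trace is non-negative and $\det\geqslant 0$ rules out two negative eigenvalues.

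The step that does not go through is your degenerate case. If $a=0$, the hypotheses give $b=0$ (from $\det = -b^2\geqslant 0$) but say nothing at all about $d$, so your conclusion ``in either case $a+d\geqslant 0$'' is unjustified --- ``$d$ read off directly'' is reading off a sign you were never given. In fact $\diag(0,-1)$ has a non-negative diagonal entry and determinant $0$ yet is not positive semidefinite, so the Observation as literally worded (``has \emph{a} non-negative diagonal entry'') is false in this corner, and no proof can close that gap; the paper's own argument has exactly the same blind spot, since $\diag(0,-1)$ also satisfies ``one non-negative eigenvalue and non-negative product of eigenvalues.'' The statement (and your argument) are correct under any of the readings that are actually used later in the paper: both diagonal entries non-negative, or the relevant diagonal entry strictly positive, or a nonzero off-diagonal entry (which together with $\det\geqslant 0$ forces both diagonal entries to be strictly positive). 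Every block to which the Observation is applied --- e.g.\ the block with diagonal entry $3$, or the determinant-zero blocks whose off-diagonal entry is $-1$ --- falls under one of these, so nothing downstream is affected; but in your write-up you should state the strengthened hypothesis you are really using rather than assert $d\geqslant 0$ in the $a=0$ case.
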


To see this, notice that a symmetric real-valued matrix is \psd{} if
and only if one of its two eigenvalues is non-negative and the product
of its two eigenvalues is non-negative, which holds if and only if it
has a non-negative diagonal entry and its determinant is non-negative.

\begin{lemma} 
\label{lm:1cpyopt} 
The dual value of the SDP given in Eq.~\ref{eq:sdp1} is upper bounded
by the value attained by Protocol~\ref{protocolP} for \mbox{$n=1$}
given in Eq.~\ref{eq:valueofprotocolP}.
\end{lemma}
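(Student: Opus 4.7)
The plan is to exhibit a feasible solution to the dual of the SDP in Eq.~\ref{eq:sdp1} whose objective value equals the value $V$ of Protocol~\ref{protocolP} given in Eq.~\ref{eq:valueofprotocolP} at $n=1$. By weak duality this sandwiches $V \leqslant \text{primal} \leqslant \text{dual} \leqslant V$ and establishes the lemma. The dual of Eq.~\ref{eq:sdp1} is the minimization $\min \sum_{i=1}^{5} \lambda_i$ subject to $\dualm := \diag(\lambda_1,\ldots,\lambda_5) - \tfrac{1}{2}W \succcurlyeq 0$.

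Motivated by complementary slackness against the Gram matrix of the protocol's five vectors, I propose $\lambda_1 = \lambda_3 = \cos\phi$, $\lambda_2 = \tfrac{1}{2}(\cos\phi - q\cos 3\phi + p)$, $\lambda_4 = \tfrac{1}{2}(\cos\phi - q\cos 3\phi)$ and $\lambda_5 = \tfrac{p}{2}$. Summing gives $\sum_i \lambda_i = 3\cos\phi - q\cos 3\phi + p$. Applying $\cos 3\phi = 4\cos^3\phi - 3\cos\phi$ together with the defining relation $4q\cos^2\phi = 1+q$ for $0 < p \leqslant \tfrac{2}{3}$ (Eq.~\ref{eq:phi}), this collapses to $2(1+q)\cos\phi + p$, which matches $V$. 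For $\tfrac{2}{3} \leqslant p \leqslant 1$ we have $\phi = 0$, so the sum becomes $3 - q + p = 2(1+p)$, also matching $V$.

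For feasibility, I would decompose $\dualm$ into small PSD pieces. First peel off the rank-one summand $\tfrac{p}{2}(e_2 + e_5)(e_2 + e_5)^\top$, which absorbs the entire row/column indexed by $z_1$ along with $\tfrac{p}{2}$ of the $(2,2)$ diagonal entry. The residual matrix $N$ is supported on indices $\{1,2,3,4\}$ corresponding to $\{x_0, x_1, y_0, z_0\}$ and inherits the $(1 \leftrightarrow 3,\ 2 \leftrightarrow 4)$ symmetry that exchanges Alice's and Bob's sides. Passing to the basis $u_\pm = (e_1 \pm e_3)/\sqrt{2}$, $v_\pm = (e_2 \pm e_4)/\sqrt{2}$ block-diagonalizes $N$ into two $2 \times 2$ matrices $N_+$ and $N_-$ with diagonals $(\cos\phi - \tfrac{1}{2},\ \tfrac{q}{2}(2\cos\phi + 1))$ and $(\cos\phi + \tfrac{1}{2},\ \tfrac{q}{2}(2\cos\phi - 1))$, and off-diagonal entries $-\tfrac{1}{2}$ and $+\tfrac{1}{2}$, respectively. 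Both diagonals are non-negative (since $\cos^2\phi = (1+q)/(4q) \geqslant \tfrac{1}{4}$ implies $\cos\phi \geqslant \tfrac{1}{2}$ on $(0, \tfrac{2}{3}]$), and both determinants collapse via $4q\cos^2\phi = 1+q$ to $(4q\cos^2\phi - (1+q))/4 = 0$. Observation~\ref{obs:psd2by2} then gives $N_\pm \succcurlyeq 0$, hence $N \succcurlyeq 0$ and $\dualm \succcurlyeq 0$. The boundary regime $\phi = 0$ on $[\tfrac{2}{3}, 1]$ is handled analogously, yielding $\det N_- = (3p-2)/2 \geqslant 0$.

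The main obstacle is finding the right decomposition: a naive assignment of one $2 \times 2$ block per nonzero off-diagonal entry of $W$ encounters infeasible PSD constraints, because the entries $+\tfrac{q}{2}$ at $(x_1, z_0)$ and $+\tfrac{p}{2}$ at $(x_1, z_1)$ of $\dualm$ are coupled through their shared $(x_1, x_1)$ diagonal entry. The key is to first strip off the $z_1$ interaction as a single rank-one piece, then exploit the residual $(1 \leftrightarrow 3,\ 2 \leftrightarrow 4)$ symmetry by the above basis change; after these two reductions the PSD check reduces to two determinant computations that vanish precisely at the protocol's choice of $\phi$, confirming that our protocol saturates complementary slackness and is an optimal non-adaptive protocol for $n=1$.
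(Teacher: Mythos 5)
Your proposal is correct and follows essentially the same route as the paper's proof: your dual vector is algebraically identical to ours (using $\cos 3\phi = 4\cos^3\phi - 3\cos\phi$ and $4q\cos^2(\phi) = 1+q$ it reduces to $\lambda = \cos(\phi)(1,q,1,q,0)+(0,\tfrac{p}{2},0,0,\tfrac{p}{2})$, and to $(1,p,1,\tfrac{p}{2},\tfrac{p}{2})$ when $\phi=0$), and your feasibility argument — peeling off the rank-one piece on the $\{x_1,z_1\}$ coordinates and block-diagonalizing the remaining $4\times4$ part by the Hadamard change of basis, then checking signs of diagonals and determinants via Observation~\ref{obs:psd2by2} — is exactly the decomposition $K=K_1+K_2$ used in the paper, up to an overall factor of $2$.
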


\begin{proof} 
Let $b = (1,1,1,1,1)$ be a vector in $\mathbb{R}^{5}$.  The dual of
the primal SDP in Eq.~\ref{eq:sdp1} is
\begin{equation}  \label{eq:sdp1_dual}
\begin{split}
&\min\limits_{\lambda} \lambda \cdot b^T\\
&\begin{split}
\text{ subject to } 
K = 2 \diag(\lambda) - W \succcurlyeq 0,
\end{split}
\end{split}
\end{equation}
where $\lambda$ is a vector in $\mathbb{R}^{5}$ and matrix
$\diag(\lambda)$ is of dimension $5 \times 5$ containing $\lambda_i$
in the $\nth{i}$ diagonal entry and zeroes off-diagonal.

First consider the range $\frac{2}{3} \leqslant p \leqslant 1$.  The
dual solution
\begin{equation*} 
\lambda = (1,p,1,\frac{p}{2},\frac{p}{2})
\end{equation*} 
has value $\lambda \cdot b^T = 2(1+p)$, matching the value of the
protocol given in Eq.~\ref{eq:valueofprotocolP}.  To show that the
constraint \mbox{$K \succcurlyeq 0$} for the dual problem is
satisfied, express matrix $K = 2\diag(\lambda) - W$ as the sum of two
matrices,
\begin{equation*} 
K = K_1 + K_2 = \left(\begin{array}{ccccc} 
\hphantom{-}2 & \hphantom{-}0 & -1 & -1 & 0 \\
\hphantom{-}0 & \hphantom{-}p & -1 & \hphantom{-}(1-p) & 0 \\
-1 & -1 & \hphantom{-}2 & \hphantom{-}0 & 0 \\
-1 & \hphantom{-}(1-p) & \hphantom{-}0 & \hphantom{-}p & 0 \\
\hphantom{-}0 & \hphantom{-}0 & \hphantom{-}0 & \hphantom{-}0 & 0 
\end{array}\right)+
\left(\begin{array}{ccccc} 
0 & 0 & 0 & 0 & 0 \\
0 & p & 0 & 0 & p \\
0 & 0 & 0 & 0 & 0 \\
0 & 0 & 0 & 0 & 0 \\
0 & p & 0 & 0 & p 
\end{array}\right).
\end{equation*} 
Matrix $K_2$ is a scaled projection with eigenvalues~$0$ and~$2p$ and
is therefore \psd.  For matrix~$K_1$, ignore its fifth row and column,
which are zero, and conjugate the remaining \mbox{$4 \times 4$}
submatrix of~$K_1$ by
$\frac{1}{\sqrt{2}}\left(\begin{smallmatrix}1&\hphantom{-}1\\1&-1
\end{smallmatrix}\right)\otimes \id$, yielding the submatrix
\begin{equation*} 
\left(\begin{array}{cccc} \hphantom{-}1 & -1 & 0 & 0 \\ -1 &
  \hphantom{-}1 & 0 & 0 \\ \hphantom{-}0 & \hphantom{-}0 & 3 & 1
  \\ \hphantom{-}0 & \hphantom{-}0 & 1 & 2p-1
\end{array}\right).
\end{equation*} 
The upper-left $2\times 2$ block is \psd, and, by
Observation~\ref{obs:psd2by2}, the lower-right block is \psd{} when
\mbox{$3(2p-1) \geqslant 1$}, which holds when \mbox{$p \geqslant
  \frac{2}{3}$}.  We have shown that matrix $K$ is the sum of two
\psd{} matrices, and it is therefore \psd.
 
Next consider the range $0 < p < \frac{2}{3}$.  The dual solution
\begin{equation*} 
\lambda = 
\cos(\phi)(1,q,1,q,0)+(0,\frac{p}{2},0,0,\frac{p}{2})
\end{equation*} 
has value $2(1+q)\cos(\phi) +p$, matching the value of the protocol
given in Eq.~\ref{eq:valueofprotocolP} for $n=1$. (When $n=1$, the
expression in Eq.~\ref{eq:valueofprotocolP} for the range $0< p <
\frac{1}{2}$ simplifies to the expression for the range $\frac{1}{2}
\leqslant p < \frac{2}{3}$.)  It remains to show that the constraint
\mbox{$K \succcurlyeq 0$} is satisfied.  Proceeding as in the case
$\frac{2}{3} \leqslant p \leqslant 1$, we write
\begin{equation*}
K=  K_1+K_2
=
\begin{pmatrix}
 \hphantom{-}2\cos(\phi) & \hphantom{-}0 & -1 & -1 & 0 \\
 \hphantom{-}0 & \hphantom{-}2q\cos(\phi) & -1 &\hphantom{-}q & 0 \\
 -1 & -1 & \hphantom{-}2\cos(\phi) & \hphantom{-}0 & 0 \\
 -1 & \hphantom{-}q & \hphantom{-}0 & \hphantom{-}2q\cos(\phi) & 0 \\
 \hphantom{-}0 & \hphantom{-}0 & \hphantom{-}0 & \hphantom{-}0 & 0
\end{pmatrix}
+
\begin{pmatrix}
 0 & 0 & 0 & 0 & 0 \\
 0 & p & 0 & 0 & p \\
 0 & 0 & 0 & 0 & 0 \\
 0 & 0 & 0 & 0 & 0 \\
 0 & p & 0 & 0 & p
\end{pmatrix}, 
\end{equation*}
and conjugate the upper-left $4 \times 4$ submatrix of~$K_1$ by
$\frac{1}{\sqrt{2}}\left(\begin{smallmatrix}1&\hphantom{-}1\\1&-1
\end{smallmatrix}\right)\otimes \id$, this time yielding the 
block matrix
\begin{equation*} 
\left(\begin{array}{cccc} 
 \hphantom{-}2\cos(\phi) - 1 & -1 & 0 & 0  \\
 -1 & \hphantom{-}q(2\cos(\phi)+1) & 0 & 0  \\
 \hphantom{-}0 & \hphantom{-}0 & 2\cos(\phi)+1 & 1  \\
 \hphantom{-}0 & \hphantom{-}0 & 1 & q(2\cos(\phi)-1) 
\end{array}\right).
\end{equation*} 
The two diagonal entries $2\cos(\phi)+1$ and $q(2\cos(\phi)+1)$ are
non-negative since both $\cos(\phi)$ and $q$ are non-negative.  Both
blocks have the same determinant $4 q \cos^2(\phi) - (1+q)$ which
equals zero.  (When $n=1$, the expression in Eq.~\ref{eq:phi} for the
range $0< p < \frac{1}{2}$ simplifies to the expression for the range
$\frac{1}{2} \leqslant p < \frac{2}{3}$.)  Applying
Observation~\ref{obs:psd2by2}, we conclude that $K$ is \psd.
\end{proof}

We have proved that the dual SDP is feasible and has a solution of
value no larger than the value attained by the protocol.  By
Tsirelson's correspondence, the protocol yields a feasible solution to
the primal SDP of the same value as the protocol.  These three values
must therefore be equal.  We~conclude that our protocol is optimal for
\mbox{$n=1$} and that the measurement angle specified by
Eq.~\ref{eq:phi} is optimal.

\section{Protocol~\ref{protocolP} is optimal for 2 and 3 copies}
\label{sec:2and3copies}

In~the preceding section, we show that no protocol can achieve a value
higher than our Protocol~\ref{protocolP} when given only a single copy
of a \bfunc{qNLB}.  We show that the same statement holds true for 2 and 3
copies of a \bfunc{qNLB}: Among all non-adaptive protocols for distillation
using at most 3 copies of a \bfunc{qNLB}, none attains a value strictly higher
than the value attained by our Protocol~\ref{protocolP} using the same
number of copies of a \bfunc{qNLB}.

\begin{theorem}\label{thm:optimalupto3copies}
Protocol~\ref{protocolP} is optimal among all non-adaptive protocols
using at most 3 copies of a \bfunc{qNLB}.
\end{theorem}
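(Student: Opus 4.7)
My plan is to extend the SDP framework developed in Section~\ref{sec:singlecopy} to $n=2$ and $n=3$ copies, following the same template used for $n=1$: vectorize Alice's and Bob's observables via Tsirelson's correspondence (Lemma~\ref{lm:tsirelson}), express the value of any non-adaptive protocol as the primal value of an SDP over the Gram matrix of those vectors, and then exhibit a feasible dual solution whose objective matches the value of Protocol~\ref{protocolP} computed in Lemma~\ref{lm:valueofprotocol}.

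The key change from the single-copy case is that the mixed-state input $\rho^{\otimes n}$ expands into $2^n$ pure-state branches. Using the identity $\ket{\psi} = (\id \otimes \sigma_\op{x})\ket{\phi}$ on each copy, one writes
\begin{equation*}
\rho^{\otimes n} = \sum_{S \subseteq [n]} p^{|S|} q^{n-|S|}
   (\id \otimes \sigma_\op{x}^S) \ket{\psi}^{\otimes n}\bra{\psi}^{\otimes n} (\id \otimes \sigma_\op{x}^S),
\end{equation*}
where $\sigma_\op{x}^S$ denotes $\sigma_\op{x}$ applied to the qubits of Bob indexed by $S$. In addition to the three vectors $x_0, x_1, y_0$ from the $n=1$ analysis, I would introduce the family
\begin{equation*}
z_S = \bigl(\id \otimes \sigma_\op{x}^S \op{B}_1 \sigma_\op{x}^S\bigr)\ket{\psi}^{\otimes n}, \qquad S \subseteq [n],
\end{equation*}
giving a Gram matrix of dimension $2^n+3$, namely $7\times 7$ for $n=2$ and $11 \times 11$ for $n=3$. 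The corresponding weight matrix $W$ carries entries $+1$ in the positions $(x_0,y_0)$, $(x_0,z_\emptyset)$, $(x_1,y_0)$, and $-p^{|S|} q^{n-|S|}$ in each position $(x_1,z_S)$, with all remaining off-diagonal entries zero.

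First I would write down the resulting dual SDP and propose a family of dual variables $\lambda$ in closed form for each of the regimes $0<p<\frac{1}{2}$, $\frac{1}{2}\leqslant p<\frac{2}{3}$, and $\frac{2}{3}\leqslant p\leqslant 1$, generalizing the $n=1$ assignment by scaling the $\cos(\phi)$ contributions with the natural binomial sums $(q\pm p)^n$ and related partial sums. By invariance of $W$ under the $S_n$ action that permutes the $z_S$, I would force $\lambda$ to depend on $S$ only through $|S|$, reducing the number of free dual variables to $n+4$ and block-diagonalizing $K = 2\diag(\lambda)-W$ along the isotypic decomposition. Since $W$ acts trivially on the non-symmetric isotypic components (it has no entries among the $z_S$), those blocks reduce to a positive multiple of the identity, so the PSD check collapses to the $(n+4) \times (n+4)$ symmetric block. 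That block I would further decompose, as in the $n=1$ proof, into a \psd{} rank-one piece plus a residual matrix that simplifies after a Hadamard-style conjugation by $\frac{1}{\sqrt{2}}\!\left(\begin{smallmatrix}1&\hphantom{-}1\\1&-1\end{smallmatrix}\right)\otimes\id$ into small $2\times 2$ or $4\times 4$ blocks amenable to Observation~\ref{obs:psd2by2}.

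The main obstacle will be ensuring that the determinant conditions for these residual blocks stay non-negative across the entire parameter range, particularly near $p=\frac{1}{2}$ where the formula for $\cos(\phi)$ switches branches and near $p=\frac{2}{3}$ where the optimal strategy switches from ``use all $n$ copies'' to ``use only one copy''. Because $n$ is bounded by~$3$, a viable fallback is to carry out $n=2$ and $n=3$ as separate cases, writing each residual block explicitly and verifying positivity by direct algebraic manipulation of the resulting low-degree polynomials in $p$ and $\cos(\phi)$, rather than relying on a single unified symmetry-based argument.
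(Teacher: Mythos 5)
There is a genuine gap in your formulation, not just in the bookkeeping: you keep only the unit-norm constraints $g_{ii}=1$, so your dual constraint matrix is $K=2\diag(\lambda)-W$ with purely diagonal dual freedom, and no such dual can match the value of Protocol~\ref{protocolP} once $n\geqslant 2$, because that relaxation is strictly loose. Since $W$ has no entries among the $z_S$, the primal can cheat: take a single unit vector $u$ and set $x_0=x_1=y_0=z_{00}=u$ and $z_{01}=z_{10}=z_{11}=-u$ for $n=2$. This Gram matrix is feasible for your SDP and attains
\begin{equation*}
\bigl(x_0\cdot y_0+x_1\cdot y_0\bigr)+\bigl(x_0\cdot z_{00}-q^2\,x_1\cdot z_{00}\bigr)-\sum_{s\neq 00} q^{2-\abs{s}}p^{\abs{s}}\,x_1\cdot z_s \;=\; 2+(1-q^2)+(1-q^2)\;=\;4-2q^2 ,
\end{equation*}
which at $p=\frac12$ equals $3.5$, strictly above the protocol value $\frac12(3\sqrt3+1)\approx 3.098$ that Theorem~\ref{thm:optimalupto3copies} must certify; likewise for $\frac23\leqslant p<1$ one gets $4-2q^n>2(1+p)$ for $n=2,3$. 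By weak duality, no assignment of diagonal $\lambda$'s—however cleverly symmetrized, conjugated, or case-split—can bring your dual objective down to the value of Lemma~\ref{lm:valueofprotocol}, so the fallback of brute-force positivity checks for $n=2,3$ does not rescue the plan.

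The missing idea is the family of consistency constraints among the $z_s$: because every $z_s=(\id^{\otimes n}\otimes\op{X}_s\op{B}_1\op{X}_s)\ket{\psi}^{\otimes n}$ arises from one and the same observable $\op{B}_1$, the inner products $z_s\cdot z_{s'}$ depend only on $s\oplus s'$, and these equalities must be imposed on the primal Gram matrix (they are exactly what kills the cheat above, e.g.\ they force $z_{00}\cdot z_{01}=z_{10}\cdot z_{11}$). The paper adds them in Eq.~\ref{eq:generalPrimalSDP}; dually they introduce the variables $\tau_k$ and matrices $H_k$ of Eq.~\ref{eq:generalDualSDP}, which place entries \emph{off} the diagonal of $K$ inside the $z$-block. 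Those off-diagonal dual entries are what make the paper's decomposition $K=W_\head+W_\tail$ possible: $W_\head$ is a fixed $4\times4$ block handled by a Hadamard conjugation much as you envisage (Lemma~\ref{lm:dual3cpyHead}), but the remaining value $1-x$ comes from $W_\tail$, a low-rank matrix with nonzero $z$--$z$ entries certified \psd{} via Observation~\ref{obs:psd2by2} for $n=2$ and Theorem~\ref{thm:Hornpsd} for $n=3$ (Lemma~\ref{lm:dual3cpyTail}). Your Tsirelson vectorization, weight matrix, and symmetry reduction are all consistent with the paper's setup, but without the extra primal constraints and their dual multipliers the bound cannot close.
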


The proofs of the cases with multiple copies follow the outline we use
in the simple single-copy case, except that now some of the steps
become significantly more involved.  We first give a general
construction of a primal SDP and its dual SDP for any number $n$ of
copies of a \bfunc{qNLB}, stated as Eqs.~\ref{eq:generalPrimalSDP}
and~\ref{eq:generalDualSDP} in Appendix~\ref{appendix:sdpconstruction}
below.  The size of the dual SDP grows exponentially in the number of
copies $n$ utilized in the \bfunc{qNLB} protocol.  In Appendix~\ref{appendix:optimaldualsolution}
below, we give a complete analytical proof of its value when $n$ is at
most~$3$, proving Theorem~\ref{thm:optimalupto3copies}.

Our proof technique is general and should in principle be extendable to any \emph{fixed} higher value of~$n$.  More desirable, however, is to discover a method for analyzing our dual SDP for \emph{all} values of $n$ simultaneously.  It seems plausible that such a generic proof technique should exist, but finding one has thus far eluded us.  The dual SDP has an appealing representation in which we have been able to maintain many symmetries and letting it have an almost algorithmic structure. We~expect that the solution value obtained in Lemma~\ref{lm:dual3cpyTail} can be generalized to all values of~$n$. 

One possible route in proving a generalized version of Theorem~\ref{thm:optimalupto3copies} that holds for all values of~$n$ is to obtain a general form for the off-diagonal entries of the matrix~$W_\tail$.  If a proof of optimality for any $n$ could be found, it would imply that we could make a statement equally strong to the \bfunc{NLB} case, thus proving that our Protocol~\ref{protocolP} is optimal among all non-adaptive protocols for \bfunc{qNLB}s.

\section{Discussion}
\label{sec:discussion}

Our~\bfunc{qNLB} distillation protocol outperforms the optimal parity protocol for \bfunc{NLB}s due to the different route it takes to achieve distillation. The classical protocol determines the final output bits by computing the parity of the individual output from each box. For correlated \bfunc{NLB}s, this leaves the expectation value unchanged for inputs $00$, $01$ and $10$, while for input $11$ the expectation is increased for values of $p$ less than half. Our \bfunc{qNLB} protocol performs better than the classical optimal protocol by hedging. Non-adaptively, the players can tweak their measurements so as to set up stronger correlations. The combined reduction in the expectation values for inputs $00$, $01$ and $10$ is overwhelmed by the increase in the expected value for input $11$. Effectively, the entangled measurements on \bfunc{qNLB}s allow access to a set of strategies that are inaccessible to classical protocols for nonlocal boxes.

\begin{figure}[th]
\begin{center}
\scalebox{1}{\includegraphics{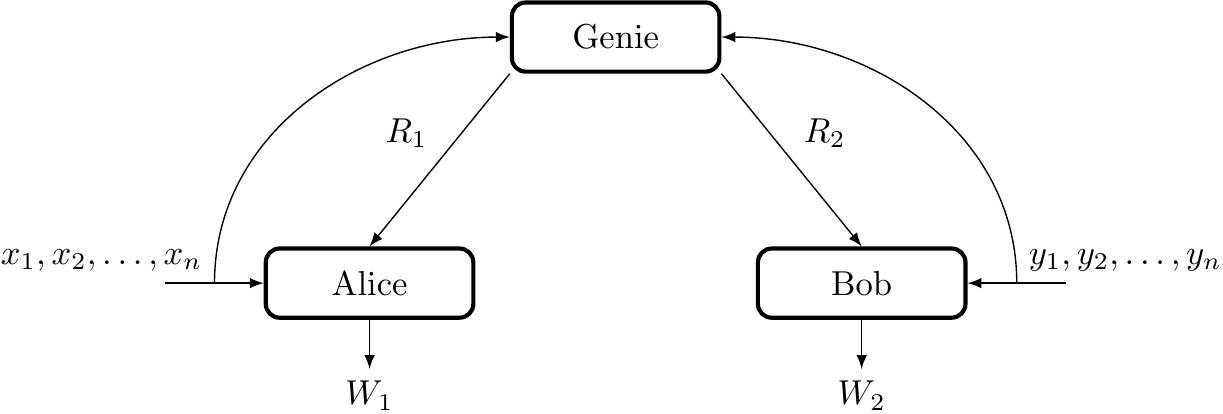}}
\end{center}
\caption[Assisted Common Information]{Assisted common information.}
\label{fig:assist}
\end{figure}
The framework we consider in Figure~\ref{fig:physmodel} is analogous to the idealised secure scenario for two-party computation model considered by Yao~\cite{Yao82}.  The trust assumptions for Charlie may be unrealistic for cryptographic protocols but the model highlights the notion that it is possible to consider \bfunc{NLB} correlations as a physical model rather than only as a hypothetical resource~\cite{Brassard05, Buhrman06}. Similar models have recently been considered under the notion of assisted common information (Figure~\ref{fig:assist}) within the cryptography community~\cite{Prabhak10}. The bounds obtained for nonlocality distillation may allow for improved cryptographic limits within an appropriate error model.

One~approach for obtaining an understanding of limits on quantum correlations is to develop underlying principles that are expected to be true for physical theories. The principle of information causality proposed by Paw{\l}owski et al.~\cite{Pawlowski09}, is one such principle. It states that the transmission of~$n$ classical bits can cause an information gain of at most~$n$ bits. The principle, a generalization of the no-signalling conditions, is violated by all correlations that violate Csirelson's bound. It is not known however, whether it is violated by all nonlocal correlations that are prohibited by quantum mechanics. Other physically motivated considerations include local quantum measurements due to Barnum et al.~\cite{Beigi09} and the uncertainty principle put forward by Oppenheim and Wehner~\cite{Wehner10}. 

In~our current work we show that if we restrict out attention to non-adaptive protocols, \bfunc{qNLB}s offer improved distillation over \bfunc{NLB}s.
A generalization of our SDP approach for \bfunc{qNLB}s may provide a similar result for adaptive protocols. This may imply distillability for correlations that are currently not known to be distillable and at the same time an increased understanding of correlations that violate principles such as information causality. Similarly, a new protocol for \bfunc{qNLB}s may close or reduce the trivial communication complexity gap~\cite{Brassard05}. At~the moment we do not have good insight into the structure of adpative protocols for \bfunc{qNLB}s or even a formulation that makes their analysis accessible.

The principle of macroscopic locality proposed by Navascu\'{e}s and Wunderlich~\cite{Navascues09}, states that a physical theory should recover classical physics in the macroscopic limit. In terms of nonlocal correlation this implies that as the number of particles with Alice and Bob become large, the joint distribution $p_{ab|xy}$ should admit a classical description. The principle suggests that quantum correlations identify exactly the set of correlations that are local {\em macroscopically}. The principle characterizes a slightly larger set, since Cavalcanti et al.~\cite{Cavalcanti10} showed that macroscopically local correlations can violate information causality. 

The principle of macroscopic locality may still identify exactly the set of quantum correlations, but require a stronger resource in form of a \bfunc{qNLB} to do so. Even if it turns out that our claim is invalid, the \bfunc{qNLB} model offers an alternate interpretation of macroscopic correlations that is not accessible with \bfunc{NLB}s. Apart from our obvious conclusion that \bfunc{qNLB}s offer stronger distillability than \bfunc{NLB}s, it is possible to interpret the distinction between quantum and classical attainable values in Figure~\ref{fig:distillsummary} as a separation, in principle, between quantum and classical predictions at the macroscopic level. The result corresponds to a form of Bell's inequality at the macroscopic level and implies a physical experimental framework within which this separation may be observed. Admittedly the result under consideration is restricted to non-adaptive distillation protocols and the Brunner and Skrzypczyk adaptive protocol is known to distill correlated \bfunc{NLB}s asymptotically to a perfect \bfunc{NLB}. Establishing optimal adaptive distillation protocols for \bfunc{qNLB}s that take into account coarse graining in measurements at the macroscopic level can demonstrate the following.
\begin{enumerate}
\item The principle of macroscopic locality identifies exactly the set of quantum correlations.

\item There exist correlations observable at the macroscopic level do not admit a classical description.
\end{enumerate}

In either case, we are led to a conclusion which improves our understanding of the correlations attainable in nature. A consequence of macroscopic quantum correlations is the possibility of identifying physical processes in nature that utilize these correlations at the biological level. Recent results regarding the inner workings of photosynthesis and bird navigation via Earth's magnetic field suggest that it very well may be so~\cite{Ball11}.

The main conclusion we draw from our work is that since \bfunc{NLB}s are not the strongest resource for producing no-signalling correlations, it is not appropriate to restrict attention to only this model when concerned with questions regarding ultimate limits on quantum nonlocality. Indeed, we prove that the \bfunc{qNLB} model offers stronger nonlocality distillability. Any~protocol for \bfunc{NLB}s can be simulated by \bfunc{qNLB}s, whereas \bfunc{NLB}s cannot simulate all \bfunc{qNLB} protocols. We~propose that questions that have so far been investigated within the framework of \bfunc{NLB}s should be re-evaluated using \bfunc{qNLB}s.

\section*{Acknowledgements} 
We~thank C\u{a}t\u{a}lin Dohotaru for useful discussions.  This work
was supported by the Canadian Institute for Advanced Research (CIFAR),
Canada's Natural Sciences and Engineering Research Council (NSERC),
the Canadian Network Centres of Excellence for Mathematics of
Information Technology and Complex Systems (MITACS), and QuantumWorks.

{
\small
\bibliographystyle{plainurl}
\bibliography{references}
}
\appendix

\section{Proof of Lemma~\ref{lm:valueofprotocol}}
\label{appendix:protocolvalue}

In~this appendix, we prove that our Protocol~\ref{protocolP} given in
Section~\ref{sec:theprotocol} attains the value given by
Eq.~\ref{lm:valueofprotocol}.

\begin{lemma} \label{lm:ncpylemma}
For a mixed state $\rho = p\ket{\phi}\bra{\phi} +
q\ket{\psi}\bra{\psi}$, where $p \in [0,1]$ is a probability and $q =
1-p$ the complementary probability, the following trace relations
hold.
\begin{align*}
\Tr(\sigma_\op{z}^{\otimes n} \otimes \sigma_\op{z}^{\otimes n} \rho^{\otimes n}) &= (q-p)^n \\
\Tr(\sigma_\op{x}^{\otimes n} \otimes \sigma_\op{x}^{\otimes n} \rho^{\otimes n}) &= 1 \\
\Tr(\sigma_\op{z}^{\otimes n} \otimes \sigma_\op{x}^{\otimes n} \rho^{\otimes n}) 
&= \Tr(\sigma_\op{x}^{\otimes n} \otimes \sigma_\op{z}^{\otimes n} \rho^{\otimes n}) = 0.
\end{align*}
\end{lemma}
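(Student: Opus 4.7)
The plan is to reduce the $n$-copy trace identities to the $n=1$ case via tensor-product factorization, and then verify the four single-copy identities by direct calculation on the two Bell states $\ket\psi$ and $\ket\phi$ that span the support of~$\rho$.

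First I would address the tensor structure. The operator $\sigma_\op{z}^{\otimes n} \otimes \sigma_\op{z}^{\otimes n}$ lives on $\mathcal{H}_\party{A}^{\otimes n} \otimes \mathcal{H}_\party{B}^{\otimes n}$ while $\rho^{\otimes n}$ naturally lives on $(\mathcal{H}_\party{A}\otimes\mathcal{H}_\party{B})^{\otimes n}$. Applying the obvious permutation of tensor factors (which leaves the trace invariant), one may regard $\sigma_\op{z}^{\otimes n} \otimes \sigma_\op{z}^{\otimes n}$ as $(\sigma_\op{z} \otimes \sigma_\op{z})^{\otimes n}$ acting on the same ordering as $\rho^{\otimes n}$, and similarly for the other three operators. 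By the multiplicativity of the trace on tensor products, this gives
\begin{equation*}
\Tr\bigl( (M_A \otimes M_B)^{\otimes n}\,\rho^{\otimes n}\bigr)
= \bigl[\Tr\bigl((M_A \otimes M_B)\,\rho\bigr)\bigr]^n
\end{equation*}
for any single-qubit operators $M_A, M_B \in \{\sigma_\op{z},\sigma_\op{x}\}$. So it suffices to establish the four single-copy identities.

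Next I would verify those by direct computation, using $\rho = p\ket{\phi}\bra{\phi} + q\ket{\psi}\bra{\psi}$ and linearity: the trace in question equals $p\tinner{\phi}{M_A\otimes M_B}{\phi} + q\tinner{\psi}{M_A\otimes M_B}{\psi}$. A short calculation on the Bell states gives $(\sigma_\op{z}\otimes\sigma_\op{z})\ket\psi=\ket\psi$ and $(\sigma_\op{z}\otimes\sigma_\op{z})\ket\phi=-\ket\phi$, yielding $\Tr((\sigma_\op{z}\otimes\sigma_\op{z})\rho)=q-p$, whose $n$-th power is $(q-p)^n$. Similarly $(\sigma_\op{x}\otimes\sigma_\op{x})$ fixes both Bell states, giving $\Tr((\sigma_\op{x}\otimes\sigma_\op{x})\rho)=p+q=1$, and $(\sigma_\op{z}\otimes\sigma_\op{x})$ swaps $\ket\psi \leftrightarrow \frac{1}{\sqrt 2}(\ket{01}-\ket{10})$ and $\ket\phi \leftrightarrow \frac{1}{\sqrt 2}(\ket{00}-\ket{11})$, both orthogonal to the original states, so the expectation vanishes on each Bell state, giving $0$. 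The computation for $\sigma_\op{x}\otimes\sigma_\op{z}$ is identical by symmetry.

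There is no real obstacle here; the only subtlety is the tensor-factor reordering step, which is purely bookkeeping. Everything else reduces to two-qubit Pauli actions on Bell states, which are standard.
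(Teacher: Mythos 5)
Your proposal is correct and follows essentially the same route as the paper: reduce to the single-copy case by factorizing the trace over tensor copies, then evaluate the four two-qubit Pauli expectations on the Bell states $\ket{\psi}$ and $\ket{\phi}$ by linearity. Your explicit treatment of the tensor-factor reordering is a minor point of added care that the paper passes over with the phrase ``the operators are separable,'' but the substance is identical.
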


\begin{proof}
First consider the case $n=1$, 
\begin{align*}
\Tr(\sigma_\op{z} \otimes \sigma_\op{z} \rho) 
&=  q \tinner{\psi}{\sigma_\op{z} \otimes \sigma_\op{z}}{\psi}
  + p \tinner{\phi}{\sigma_\op{z} \otimes \sigma_\op{z}}{\phi}
  = q-p \\
\Tr(\sigma_\op{x} \otimes \sigma_\op{x} \rho) 
&=  q \tinner{\psi}{\sigma_\op{x} \otimes \sigma_\op{x}}{\psi}
  + p \tinner{\phi}{\sigma_\op{x} \otimes \sigma_\op{x}}{\phi} 
  = q+p = 1 \\
\Tr(\sigma_\op{x} \otimes \sigma_\op{z} \rho)
& = \Tr(\sigma_\op{z} \otimes \sigma_\op{x} \rho) 
  = q \tinner{\psi}{\sigma_\op{x} \otimes \sigma_\op{x}}{\psi}
  + p \tinner{\phi}{\sigma_\op{x} \otimes \sigma_\op{x}}{\phi} 
  = 0.
\end{align*}
For $n>1$, using that the operators are separable, rewrite
$\Tr(\sigma_1^{\otimes n} \otimes \sigma_2^{\otimes n} \rho^{\otimes
  n}) = \left( \Tr(\sigma_1 \otimes \sigma_2 \rho) \right)^n$ for all
Pauli operators $\sigma_1$ and~$\sigma_2$, and apply the
case~\mbox{$n=1$}.
\end{proof}

We~now prove Lemma~\ref{lm:valueofprotocol} by substituting the
appropriate expected values for each term in Equation~\ref{v:qnlb}.

\begin{proofof}{Lemma~\ref{lm:valueofprotocol}.}
Let Alice and Bob share~$n$ identical copies of a correlated \bfunc{qNLB} and
receive input bits~$x$ and~$y$ respectively. Application of
Protocol~\ref{protocolP} with observables~$\op{A}_x$ and~$\op{B}_y$
yields the following expectation values for inputs $00, 01$ and~$10$,
\begin{align}
\label{eq:eq1}
\bra{\psi}^{\otimes n} \op{A}_0 \otimes \op{B}_0 \ket{\psi}^{\otimes
  n} 
&=
\cos^2\left(\frac{\phi}{2}\right)-\sin^2\left(\frac{\phi}{2}\right) =
\cos\left(\phi\right), \\ 
\bra{\psi}^{\otimes n} \op{A}_0 \otimes
\op{B}_1 \ket{\psi}^{\otimes n} 
&=
 \cos\left(\frac{\phi}{2}\right)\cos\left(\frac{3\phi}{2}\right)
+\sin\left(\frac{\phi}{2}\right)\sin\left(\frac{3\phi}{2}\right)
= \cos\left(\phi\right) \nonumber \\ 
&= \bra{\psi}^{\otimes n}
\op{A}_1 \otimes \op{B}_0 \ket{\psi}^{\otimes n}. \nonumber
\end{align}
We reap the benefits of obtaining a lower value for the above inputs by obtaining a higher increase in the value for input~$11$ when \mbox{$0 < p < \frac{2}{3}$}. Applying Lemma~\ref{lm:ncpylemma}, the expectation value for input $11$ for \mbox{$0 < p < \frac{1}{2}$} is given by
\begin{align}
\label{eq:eq2}
\Tr\left(\op{A}_1 \otimes \op{B}_1 \rho^{\otimes n}\right)
&=\cos^2\left(\frac{3\phi}{2}\right)\Tr\left(\sigma_\op{z}^{\otimes n} \otimes \sigma_\op{z}^{\otimes n} \rho^{\otimes n}\right) -\sin^2\left(\frac{3\phi}{2}\right) \Tr\left(\sigma_\op{x}^{\otimes n} \otimes \sigma_\op{x}^{\otimes n} \rho^{\otimes n}\right) \nonumber \\
&=\left(q-p\right)^n\cos^2\left(\frac{3\phi}{2}\right)
-\sin^2\left(\frac{3\phi}{2}\right) \nonumber \\
&=\frac{1}{2}\left( (1+(q-p)^n)-(1-(q-p)^n) \right) 
\cos^2\left(\frac{3\phi}{2}\right) \nonumber \\
& -\frac{1}{2}\left( (1+(q-p)^n)+(1-(q-p)^n) \right) 
\sin^2\left(\frac{3\phi}{2}\right) \nonumber \\
&=\frac{1}{2}((1+(q-p)^n)\cos(3\phi)-(1-(q-p)^n)).
\end{align}
Equation~\ref{eq:eq2} yields the value for \mbox{$\frac{1}{2} \leqslant p \leqslant 1$} if we fix \mbox{$n=1$}. The expression is simplified by choosing $\phi$ as specified in Protocol~\ref{protocolP} and applying the following trignometric equivalence,
\begin{equation}
\label{eq:phi2}
\cos(3\phi) = \left\{ 
\begin{array}{ll}
\left( \frac{-2(q-p)^n}{1+(q-p)^n}\right) \cos(\phi) 
  & \text{ if } 0 < p < \frac{1}{2} \\
\frac{p-q}{q}\cos(\phi) 
  & \text{ if } \frac{1}{2} \leqslant p < \frac{2}{3}  \\
1 & \text{ if } \frac{2}{3} \leqslant p \leqslant 1.  \\
\end{array} \right.
\end{equation}
The value $V$ attained by Protocol~\ref{protocolP} is obtained by substituting the expectation values~\ref{eq:eq1} and~\ref{eq:eq2} in Equation~\ref{v:qnlb}, which gives
\begin{align*}
V &= \bra{\psi}^{\otimes n}( \op{A}_0 \otimes \op{B}_0 + \op{A}_0 \otimes \op{B}_1
+ \op{A}_1 \otimes \op{B}_0) \ket{\psi}^{\otimes n} - \Tr(\op{A}_1 \otimes \op{B}_1 \rho^{\otimes n}) \\
& = 3\cos\left(\phi\right)-\frac{1}{2}((1+(q-p)^n)\cos(3\phi)-(1-(q-p)^n)).
\end{align*}
To~complete the proof substitute Equation~\ref{eq:phi2} in the expression for~$V$ and simplify to obtain
\begin{equation*}
V =
\begin{cases}
(3+(q-p)^n)\cos(\phi) +\frac{1}{2}(1-(q-p)^n)
& \text{ if $0 < p < \frac{1}{2}$}\\
2(1+q)\cos(\phi) +p 
& \text{ if $\frac{1}{2} \leqslant p < \frac{2}{3}$}\\
2(1+p) 
& \text{ if $\frac{2}{3} \leqslant p \leqslant 1$.}
\end{cases}
\end{equation*}
\end{proofof}
The value attained by Protocol~\ref{protocolP} when Alice and Bob share~$n$ identical copies of a correlated \bfunc{qNLB} is strictly greater than the value attained by the optimal classical protocol for \mbox{$0 < p \leqslant \frac{1}{2}$}. To~verify the claim we need to show that the following inequality holds for \mbox{$0 < p \leqslant~\frac{1}{2}$},
\begin{equation*}
3-l < \frac{3+l}{2}\sqrt{\frac{3+l}{1+l}}+\frac{1-l}{2},
\end{equation*}
where \mbox{$l = (q-p)^n$} with~$l$ ranging between \mbox{$0 \leqslant l < 1$}. The inequality may be simplified to obtain,
\begin{equation*}
4(1+l) < (3+l)\sqrt{3+l}.
\end{equation*}
If~we substitute \mbox{$k=3+l$}, with $k$ ranging between \mbox{$3 \leqslant k < 4$}, we obtain the inequality,
\begin{equation*}
4k-k\sqrt{k}-8 < 0.
\end{equation*}
The inequality is verfied by checking that the expression on the left
hand side is negative for \mbox{$0<k<4$} and has roots at~$k$ equal
to~$4$. The limit of $(q-p)^n$ as $n$ approaches infinity is $0$ for
\mbox{$0 < p \leqslant \frac{1}{2}$}. We~conclude that
Protocol~\ref{protocolP} asymptotically distills correlated \bfunc{qNLB}s to
the value \mbox{$\frac{1}{2}(3\sqrt{3}+1) \approx 3.098076$} for~$p$
less than a half.

\section{Constructing the 
\texorpdfstring{$\boldsymbol{n}$}{\textbf\protect\textit{n}} copy SDP}
\label{appendix:sdpconstruction}

Let Alice and Bob share~$n$ identical copies of a correlated \bfunc{qNLB} and
receive input bits~$x$ and~$y$ respectively. Alice and Bob apply the
observables $\op{A}_x$ and $\op{B}_y$ respectively, as specified in
Protocol~\ref{protocolP}. Recall that the value attained for the CHSH
inequality is
\begin{equation*}
V = \bra{\psi}^{\otimes n}( \op{A}_0 \otimes \op{B}_0 + \op{A}_0 \otimes \op{B}_1
+ \op{A}_1 \otimes \op{B}_0) \ket{\psi}^{\otimes n} - \Tr(\op{A}_1 \otimes \op{B}_1 \rho^{\otimes n}).
\end{equation*}
We define \mbox{$N=2^n+3$} vectors, one vector for each of Alice's two observables $\op{A}_x$, one for Bob's observable~$\op{B}_0$, and \emph{$2^n$} vectors for Bob's observable $\op{B}_1$. Let the $2^n$ vectors $z_s$ be indexed by a length~$n$ bit string $s$ in $\{0,1\}^n$ and define \mbox{$\op{X}_s = \sigma_\op{x}^{s_1} \otimes \sigma_\op{x}^{s_2} \otimes \cdots \otimes \sigma_\op{x}^{s_n}$} so that,
\begin{align*}
x_0 &= (\op{A}_0 \otimes \id^{\otimes n}) \ket{\psi}^{\otimes n}\\
x_1 &= (\op{A}_1 \otimes \id^{\otimes n}) \ket{\psi}^{\otimes n}\\
y_0 &= (\id^{\otimes n} \otimes \op{B}_0) \ket{\psi}^{\otimes n}\\
z_s &= z_{s_1s_2 \ldots s_n} = (\id^{\otimes n} \otimes (\op{X}_s \op{B}_1 \op{X}_s)) \ket{\psi}^{\otimes n}.
\end{align*}
Let $G= [g_{ij}]$ be the Gram Matrix of the $N$ vectors \mbox{$\{x_0, x_1, y_0, z_{0^n},z_{0^{n-1}1},\ldots, z_{1^n}\}$}. Set $W$ to be the symmetric weight matrix
\begin{equation*}
w_{ij}=w_{ji} =
\begin{cases}
1
& \text{ if $(i,j)\in \{(1,3),(1,4),(2,3)\}$}\\
-q^{n-\abs{s}}p^{\abs{s}}
& \text{ if $i=2$ and $j=3+\abs{s}$}\\
0
& \text{otherwise,}
\end{cases}
\end{equation*}
where $\abs{s}$ is the Hamming weight of the bit string~$s$. Let $s,s',t$ and $t'$ be length $n$ bit strings in  $\{0,1\}^n$ such that $s \neq s'$ and $t \neq t'$. Optimizing the value $V$ attained by Protocol~\ref{protocolP} is then equivalent to finding an optimal primal solution to the following SDP.
\begin{equation}\label{eq:generalPrimalSDP}
\begin{split}
&\max\limits_{G} \frac{1}{2}\Tr(GW) \\
&\begin{split}
\text{ subject to } & G \succcurlyeq 0\\
& g_{ii} = 1  \text{ for all } i \in \{1,\ldots,N\}\\
& g_{3+\abs{s},3+\abs{s'}} = g_{3+\abs{t},3+\abs{t'}} \textrm{ if and only if } s \oplus s' = t \oplus t'.
\end{split}
\end{split}
\end{equation}
We~already encountered the first two set of constraints in the primal for the single copy SDP in Section~\ref{sec:singlecopy}. These constraints ensure that the matrix $G$ is a Gram matrix and the $N$ vectors used in its construction have unit norm. The new set of constraints are derived from the $(2^{n-1}-1)(2^n-1)$ inner product restrictions of the form $z_s \cdot z_{s'} = z_t \cdot z_{t'}$ on the $z_s$ vectors.

To~obtain the dual, we define vector $\lambda'$ in $\mathbb{R}^{M}$, with \mbox{$M=2^n+3+(2^{n-1}-1)(2^n-1)$}, where the first \mbox{$N=2^n+3$} components contribute to the solution value of the dual and the remaining entries correspond to the additional constraints. To distinguish between these two different roles we partition $\lambda'$ into two component vectors $\mu$ and $\tau$ such that,
\begin{equation*}
\mu_i = \left\{\begin{array}{ll}
\lambda_i'
   & \text{ if } 1 \leqslant i \leqslant N \\
0
   & \text{ if } N < i \leqslant M 
\end{array} \right. 
\quad \text{and} \quad
\tau_i = \left\{\begin{array}{ll}
0
   & \text{ if } 1 \leqslant i \leqslant N \\
\lambda_i'
   & \text{ if } N < i \leqslant M.
\end{array} \right.
\end{equation*}

Define vector \mbox{$b \in \mathbb{R}^M$} and let \mbox{$b_i=1$} for \mbox{$i \leqslant N$} and~$0$ otherwise. Given four unique length $n$ bit strings $s,s',t$ and $t'$, for each constraint of the form $s \oplus s' = t \oplus t'$, define a matrix $H_k$ for $N < k \leqslant M$,
\begin{equation*}
h_{ij} =
\begin{cases}
\hphantom{-}1
& \text{ if $i = 3+\abs{s}, j=3+\abs{s'}$ such that  $s \oplus s' = t \oplus t'$}\\
-1
& \text{ if $i = 3+\abs{t}, j=3+\abs{t'}$ such that  $s \oplus s' = t \oplus t'$}\\
\hphantom{-}0
& \text{otherwise.}
\end{cases}
\end{equation*}
The Lagrangian for the problem is given~by
\begin{eqnarray*}
\mathcal{L}(G,\lambda',Z) &=& \frac{1}{2}\Tr(GW)+\Tr(ZG)+ \Tr(\diag(\mu)-\diag(\mu)G)-\sum_{k=N+1}^M \tau_k\Tr(H_k G) \\
&=& \lambda' \cdot b + \Tr\left( \left(\frac{1}{2}W+Z-\diag(\mu)-\sum_{k=N+1}^M \tau_k H_k \right) G \right),
\end{eqnarray*}
where $\lambda'$ and \mbox{$Z \succcurlyeq 0$} are the dual variables. The dual function is then given~by
\begin{displaymath}
g(\lambda', Z) = \sup\limits_{G} \mathcal{L}(G,\lambda',Z) = \left\{ \begin{array}{ll}
\lambda' \cdot b       & \textrm{if $\frac{1}{2}W+Z-\diag(\mu)-\sum_{k=N+1}^M \tau_k H_k=0$}\\
+\infty & \textrm{otherwise.}
\end{array} \right.
\end{displaymath}
The dual problem may be stated as follows,
\begin{equation*}
\begin{split}
&\min\limits_{\lambda'} \lambda' \cdot b \\
&\begin{split}
\text{ subject to } & \frac{1}{2}W+Z-\diag(\mu)-\sum_{k=N+1}^M \tau_k H_k=0 \\ 
& Z \succcurlyeq 0.
\end{split}
\end{split}
\end{equation*}
We~simplify the formulation by removing variable $Z$ and defining $\lambda = 2\lambda'$ to obtain,
\begin{equation}
\label{eq:generalDualSDP}
\begin{split}
&\min\limits_{\lambda'} \lambda' \cdot b \\
&\text{ subject to } K  = 2\Big( \diag(\mu)-\sum_{k=N+1}^M \tau_k H_k \Big)-W \succcurlyeq 0.
\end{split}
\end{equation}
Oppenheim and Wehner~\cite{Wehner10} have recently shown that the strength of nonlocality is related to the uncertainty principle and entanglement steering. The off-diagonal constraints in our SDP formulation and the value they take may reveal additional insights about this relationship. Violation of one of the constraints may imply violation of a linked uncertainty relation. In~essence the constraints form a restriction on entanglement steering, where given Alice's measurements they restrict the states that Bob may now prepare and vice-versa.

In~Appendix~\ref{appendix:optimaldualsolution} we utilize the above formulation of the dual
to show that Protocol~\ref{protocolP} is the optimal non-adaptive
protocol for Alice and Bob when they have access to $2$ or $3$ copies
of correlated \bfunc{qNLB}s.

\section{Optimal dual solutions for \texorpdfstring{$\boldsymbol{2}$}{\textbf\protect\textit{2}} and \texorpdfstring{$\boldsymbol{3}$}{\textbf\protect\textit{3}} copies}
\label{appendix:optimaldualsolution}
The main idea we use to show optimality for the $2$ and $3$ copy
cases, as in the single copy case is to break up the constraint matrix
$K$ into a sum of matrices \mbox{$K = W_\head + W_\tail$} and show
that each matrix is \psd{}. We~decompose~$K$ such that there is a
fixed size $4 \times 4$ matrix $W_\head$, while the matrix $W_\tail$
has size $(2^n+1) \times (2^n+1)$. We~begin by defining the a cut-off
value~$x$ that determines the decomposition of $K$ into $W_\head$
and~$W_\tail$.
\begin{equation}
\label{eq:defnx}
x = \left\{ \begin{array}{ll}
\frac{1}{2}(1+(q-p)^n) & \text{ if } 0 < p \leqslant \frac{1}{2} \\
1-p & \text{ if } \frac{1}{2} < p < 1.
\end{array} \right.
\end{equation}
Define the matrix,
\begin{equation}
W_\head = 
\begin{pmatrix} 
\hphantom{-}\lambda_1 & \hphantom{-}0 & -1 & -1  \\
\hphantom{-}0 & \hphantom{-}l_1 & -1 & \hphantom{-}x \\
-1 & -1 & \hphantom{-}\lambda_3 & \hphantom{-}0  \\
-1 & \hphantom{-}x & \hphantom{-}0 & \hphantom{-}l_2 
\end{pmatrix}.
\end{equation}
The diagonal entries $\lambda_1$ and $\lambda_3$ are exactly the first and third components of the dual solution vector $\lambda$, while the entries $l_1$ and $l_2$ only have a partial contribution to the entries $\lambda_2$ and~$\lambda_4$. Next we determine the diagonal values of $W_\head$ and show that the matrix is \psd{} for these values.
\begin{lemma}
\label{lm:dual3cpyHead}
The dual value for matrix~$W_\head$ is given by 
\begin{equation*}
V' = \left\{\begin{array}{ll}
\sqrt{\frac{(1+x)^3}{x}} & \text{ if } 0 < p < \frac{2}{3} \\
3-x & \text{ if } \frac{2}{3} \leqslant p \leqslant 1.
\end{array} \right.
\end{equation*}
\end{lemma}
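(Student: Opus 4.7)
The plan is to exhibit explicit values of $\lambda_1, l_1, \lambda_3, l_2$ making $W_\head$ positive semidefinite and whose diagonal sum yields the stated $V'$, using the block-diagonalization trick from Lemma~\ref{lm:1cpyopt}. First I would impose the symmetric ansatz $\lambda_1 = \lambda_3 =: \lambda$ and $l_1 = l_2 =: l$, and conjugate $W_\head$ by $\tfrac{1}{\sqrt{2}}\bigl(\begin{smallmatrix}1 & \phantom{-}1 \\ 1 & -1\end{smallmatrix}\bigr) \otimes \id$ (after reordering rows and columns to pair positions $1 \leftrightarrow 3$ and $2 \leftrightarrow 4$). A direct calculation block-diagonalizes the result into
\begin{equation*}
\begin{pmatrix}\lambda - 1 & -1 \\ -1 & l + x\end{pmatrix}
\quad\text{and}\quad
\begin{pmatrix}\lambda + 1 & 1 \\ 1 & l - x\end{pmatrix},
\end{equation*}
and by Observation~\ref{obs:psd2by2}, $W_\head \succcurlyeq 0$ is equivalent to $\lambda \geqslant 1$, $l \geqslant x$, $(\lambda - 1)(l + x) \geqslant 1$, and $(\lambda + 1)(l - x) \geqslant 1$. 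Since the contribution to the dual objective is $V' = \tfrac{1}{2}(\lambda_1 + l_1 + \lambda_3 + l_2) = \lambda + l$, it remains to minimize $\lambda + l$ over this feasible region.

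Next I would split on the value of $x$. When $\tfrac{2}{3} \leqslant p \leqslant 1$ (so $x \leqslant \tfrac{1}{3}$), I set $\lambda = 2$ and $l = 1 - x$: the first block has determinant $(2-1)(1-x+x) - 1 = 0$, while the second has determinant $3(1-2x) - 1 = 2 - 6x \geqslant 0$, and both diagonals are non-negative, so $W_\head \succcurlyeq 0$; the sum is $\lambda + l = 3 - x$. When $0 < p < \tfrac{2}{3}$ (so $x > \tfrac{1}{3}$), the previous choice becomes infeasible because the second block's determinant turns negative, so instead I solve the boundary system $(\lambda - 1)(l + x) = (\lambda + 1)(l - x) = 1$. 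Subtracting the two equations yields $l = x\lambda$, and substituting into either one gives $x(\lambda^2 - 1) = 1$, so $\lambda = \sqrt{(1+x)/x}$ and $l = \sqrt{x(1+x)}$. A short simplification then confirms $\lambda + l = \sqrt{1+x}\bigl(\tfrac{1}{\sqrt{x}} + \sqrt{x}\bigr) = \sqrt{(1+x)^3/x}$, matching the claim.

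The main obstacle I anticipate is justifying the symmetric ansatz, since $W_\head$ has no manifest exchange symmetry: its off-diagonal $(2,4)$ entry is $x$ whereas the entries at $(1,3)$, $(1,4)$, and $(2,3)$ are all $-1$. I plan to handle this by taking the Schur complement of $W_\head$ with respect to its upper-left block $\diag(\lambda_1, l_1)$, which reduces $W_\head \succcurlyeq 0$ to a $2 \times 2$ positive semidefiniteness condition whose entries are combinations of reciprocals of $\lambda_1, l_1$ and $\lambda_3, l_2$; applying AM--GM on the resulting trace and determinant constraints and then optimizing over $\lambda_1, l_1$ should pin the global minimum at the symmetric point, after which the case analysis of the previous paragraph delivers the claimed $V'$.
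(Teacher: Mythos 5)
Your proposal is correct and follows essentially the same route as the paper: the same symmetric ansatz $\lambda_1=\lambda_3$, $l_1=l_2$, the same conjugation by $\frac{1}{\sqrt{2}}\bigl(\begin{smallmatrix}1&\hphantom{-}1\\1&-1\end{smallmatrix}\bigr)\otimes\id$ yielding the two $2\times 2$ blocks, and the identical choices $\lambda=2$, $l=1-x$ for $\frac{2}{3}\leqslant p\leqslant 1$ and $\lambda=\sqrt{1+\tfrac{1}{x}}$, $l=x\lambda$ for $0<p<\frac{2}{3}$, giving $3-x$ and $\sqrt{(1+x)^3/x}$ respectively. The final concern you raise about justifying the symmetric ansatz as a \emph{global} minimizer is unnecessary: the lemma is only used to exhibit a feasible dual contribution attaining the stated value (optimality is later certified by matching the primal value of Protocol~\ref{protocolP}), which is exactly what the paper does.
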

\begin{proof}
We~fix \mbox{$\lambda_1 = \lambda_3$} and \mbox{$l_1 = l_2$} and choose the diagonal entries as follows,
\begin{equation*}
\lambda_1= \left\{\begin{array}{ll}
\sqrt{1+\frac{1}{x}}
   & \text{ if } 0 < p < \frac{2}{3} \\
2   & \text{ if } \frac{2}{3}  \leqslant p \leqslant 1 
\end{array} \right. 
\quad \text{and} \quad
l_1 = \left\{\begin{array}{ll}
x\lambda_1     & \text{ if } 0 < p < \frac{2}{3} \\
1-x     & \text{ if } \frac{2}{3}  \leqslant p \leqslant 1.
\end{array} \right.
\end{equation*}
The dual value $V'$ achieved by~$W_\head$ can be calculated by summing up the diagonal entries. For $0 < p < \frac{2}{3}$, we obtain 
\begin{equation*}
V' = \lambda_1 + x\lambda_1 =(1+x)\sqrt{\frac{(1+x)}{x}} = \sqrt{\frac{(1+x)^3}{x}},
\end{equation*}
as required by Lemma~\ref{lm:dual3cpyHead}. Similarly, for $\frac{2}{3} \leqslant p \leqslant 1$, sum of the diagonal entries in~$W_\head$ is given by 
\begin{equation*}
V' = 2+1-x = 3-x.
\end{equation*} 
To~prove that the matrix is \psd{} we conjugate the matrix~$W_\head$ by \mbox{$\op{H} \otimes \id$},
\begin{equation*}
(\op{H} \otimes \id) W_\head (\op{H} \otimes \id) = 
\begin{pmatrix} 
\hphantom{-}\lambda_1 -1 & -1 & 0 & 0  \\
-1 & \hphantom{-}l_1+x & 0 & 0 \\
\hphantom{-}0 & \hphantom{-}0 & \lambda_1 +1 & 1  \\
\hphantom{-}0 & \hphantom{-}0 & 1 & l_1 - x 
\end{pmatrix}.
\end{equation*}
The matrix is \psd{} if \mbox{$(\lambda_1-1)(l_1+x) \geqslant 1$} and
\mbox{$(\lambda_1+1)(l_1 -x) \geqslant 1$}. Since both these inequalities hold for our choice of $\lambda_1$ and $l_1$, the matrix $W_\head$ is \psd{}.
\end{proof}
Unfortunately, we do not obtain a fixed size matrix $W_\tail$ similar
to~$W_\head$ that works for all~$n$. To~provide an overview of the dual constraints involved, we begin by giving a detailed construction of the $W_\tail$ matrix for the $3$ copy case. Let \mbox{$\ket{\Lambda} =  \ket{\psi}^{\otimes 3}$}. We~define the vectors $z_s$ as follows,
\begin{align*}
z_{000} & =  \id^{\otimes 3} \otimes \op{B}_1 \ket{\Lambda} \\
z_{001} & =  \id^{\otimes 3} \otimes (\id \otimes \id \otimes \sigma_\op{x})\op{B}_1(\id \otimes \id \otimes \sigma_\op{x}) \ket{\Lambda} \\
z_{010} & =  \id^{\otimes 3} \otimes (\id \otimes \sigma_\op{x} \otimes \id)\op{B}_1(\id \otimes \sigma_\op{x} \otimes \id) \ket{\Lambda} \\
z_{001} & =  \id^{\otimes 3} \otimes (\id \otimes \sigma_\op{x} \otimes \sigma_\op{x})\op{B}_1(\id \otimes \sigma_\op{x} \otimes \sigma_\op{x}) \ket{\Lambda} \\
z_{100} & =  \id^{\otimes 3} \otimes (\sigma_\op{x} \otimes \id \otimes \id)\op{B}_1(\sigma_\op{x} \otimes \id \otimes \id) \ket{\Lambda} \\
z_{101} & =  \id^{\otimes 3} \otimes (\sigma_\op{x} \otimes \id \otimes \sigma_\op{x})\op{B}_1(\sigma_\op{x} \otimes \id \otimes \sigma_\op{x})\ket{\Lambda} \\
z_{110} & =  \id^{\otimes 3} \otimes (\sigma_\op{x} \otimes \sigma_\op{x} \otimes \id)\op{B}_1(\sigma_\op{x} \otimes \sigma_\op{x} \otimes \id) \ket{\Lambda} \\
z_{111} & =  \id^{\otimes 3} \otimes (\sigma_\op{x} \otimes \sigma_\op{x} \otimes \sigma_\op{x})\op{B}_1(\sigma_\op{x} \otimes \sigma_\op{x} \otimes \sigma_\op{x}) \ket{\Lambda}.
\end{align*}
The following inner product constraints apply on the vectors $z_s$ due to their definition. These are exactly the additional constraints
required for the~$3$ copy dual solution.
\begin{alignat*}{6}
z_{001} \cdot z_{000} &=& \,\, z_{110} \cdot z_{111} &=& \,\, z_{011} \cdot z_{010} &=& \,\, z_{100} \cdot z_{101} \\
z_{010} \cdot z_{000} &=& z_{101} \cdot z_{111} &=& z_{100} \cdot z_{110} &=& z_{001} \cdot z_{011} \\
z_{011} \cdot z_{000} &=& z_{100} \cdot z_{111} &=& z_{001} \cdot z_{010} &=& z_{110} \cdot z_{101} \\
z_{100} \cdot z_{000} &=& z_{011} \cdot z_{111} &=& z_{110} \cdot z_{010} &=& z_{001} \cdot z_{101} \\
z_{101} \cdot z_{000} &=& z_{010} \cdot z_{111} &=& z_{001} \cdot z_{100} &=& z_{011} \cdot z_{110} \\
z_{110} \cdot z_{000} &=& z_{001} \cdot z_{111} &=& z_{100} \cdot z_{010} &=& z_{011} \cdot z_{101} \\
z_{111} \cdot z_{000} &=& z_{001} \cdot z_{110} &=& z_{101} \cdot z_{010} &=& z_{100} \cdot z_{011} 
\end{alignat*}
The dual constraint matrix~$K$ for the~$3$ copy case is given~by, 
\begin{equation*}
K=
\begin{pmatrix} 
\hphantom{-}\lambda_1 & \hphantom{-}0 & -1 & -1 & \hphantom{-}0 & \hphantom{-}0 & \hphantom{-}0 & \hphantom{-}0 & \hphantom{-}0 & \hphantom{-}0 & 0  \\
\hphantom{-}0 & \hphantom{-}\lambda_2 & -1 & \hphantom{-}q^3 & \hphantom{-}q^2p & \hphantom{-}q^2p & \hphantom{-}qp^2 & \hphantom{-}q^2p & \hphantom{-}qp^2 & \hphantom{-}qp^2 & p^3  \\
-1 & -1 & \hphantom{-}\lambda_3 & \hphantom{-}0 & \hphantom{-}0 & \hphantom{-}0 & \hphantom{-}0 & \hphantom{-}0 & \hphantom{-}0 & \hphantom{-}0 & 0  \\
-1 & \hphantom{-}q^3 & \hphantom{-}0 & \hphantom{-}\lambda_4 & \hphantom{-}0 & \hphantom{-}0 & \hphantom{-}0 & \hphantom{-}0 & \hphantom{-}0 & \hphantom{-}0 & 0  \\
\hphantom{-}0 & \hphantom{-}q^2p & \hphantom{-}0 & \hphantom{-}{\lambda_{12}+\lambda_{19}+\lambda_{26} } & \hphantom{-}\lambda_5 & \hphantom{-}0 & \hphantom{-}0 & \hphantom{-}0 & \hphantom{-}0 & \hphantom{-}0 & 0   \\
\hphantom{-}0& \hphantom{-}q^2p& \hphantom{-}0& \hphantom{-}{\lambda_{13}+\lambda_{20}+\lambda_{27} } & {-\lambda_{21}} & \hphantom{-}\lambda_6 & \hphantom{-}0 & \hphantom{-}0 & \hphantom{-}0 & \hphantom{-}0 & 0   \\
\hphantom{-}0& \hphantom{-}qp^2& \hphantom{-}0& \hphantom{-}{\lambda_{14}+\lambda_{21}+\lambda_{28}}& {-\lambda_{27}} & {-\lambda_{19}} & \hphantom{-}\lambda_7 & \hphantom{-}0 & \hphantom{-}0 & \hphantom{-}0 & 0   \\
\hphantom{-}0& \hphantom{-}q^2p& \hphantom{-}0& \hphantom{-}{\lambda_{15}+\lambda_{22}+\lambda_{29} }  & -\lambda_{23} & {-\lambda_{24}} & {-\lambda_{32}} & \hphantom{-}\lambda_8 & \hphantom{-}0 & \hphantom{-}0 & 0   \\
\hphantom{-}0& \hphantom{-}qp^2& \hphantom{-}0& \hphantom{-}\lambda_{16}+\lambda_{23}+\lambda_{30}          & {-\lambda_{29}} & {-\lambda_{25}} & {-\lambda_{31}} & {-\lambda_{26}} & \hphantom{-}\lambda_9 & \hphantom{-}0 & 0   \\
\hphantom{-}0& \hphantom{-}qp^2& \hphantom{-}0& \hphantom{-}{\lambda_{17}+\lambda_{24}+\lambda_{31} }  & {-\lambda_{18}} & {-\lambda_{22}} & -\lambda_{30} & {-\lambda_{20}} & {-\lambda_{28}} & \hphantom{-}\lambda_{10} & 0   \\
\hphantom{-}0&\hphantom{-}p^3& \hphantom{-}0&  \hphantom{-}{\lambda_{18}+\lambda_{25}+\lambda_{32} }  & {-\lambda_{17}} & {-\lambda_{16}} & {-\lambda_{15}} & {-\lambda_{14}} & {-\lambda_{13}} & {-\lambda_{12}} & \lambda_{11}   
\end{pmatrix},
\end{equation*}
where only the lower triangular matrix is shown for the
constraints. We~decompose~$K$ into~$W_\head$ which contains
contribution only from the upper left \mbox{$4 \times 4$} block matrix
with the remaining entries contained in~$W_\tail$.
\begin{align}
\label{Wtail3cpy}
K  = & W_\head + W_\tail \nonumber \\
 = &  
\begin{pmatrix} 
\hphantom{-}\lambda_1 & \hphantom{-}0 & -1 & -1  \\
\hphantom{-}0 & \hphantom{-}l_1 & -1 & \hphantom{-}x \\
-1 & -1 & \hphantom{-}\lambda_3 & \hphantom{-}0  \\
-1 & \hphantom{-}x & \hphantom{-}0 & \hphantom{-}l_2 
\end{pmatrix} \\
  &+
\begin{pmatrix} 
0 & 0 & 0 & 0 & \hphantom{-}0 & \hphantom{-}0 & \hphantom{-}0 & \hphantom{-}0 & \hphantom{-}0 & \hphantom{-}0 & 0  \\
0 & k_1 & 0 & q^3-x & \hphantom{-}q^2p & \hphantom{-}q^2p & \hphantom{-}qp^2 & \hphantom{-}q^2p & \hphantom{-}qp^2 & \hphantom{-}qp^2 & p^3  \\
0 & 0 & 0 & 0 & \hphantom{-}0 & \hphantom{-}0 & \hphantom{-}0 & \hphantom{-}0 & \hphantom{-}0 & \hphantom{-}0 & 0  \\
0 & q^3-x & 0 & k_2 & \hphantom{-}0 & \hphantom{-}0 & \hphantom{-}0 & \hphantom{-}0 & \hphantom{-}0 & \hphantom{-}0 & 0  \\
0&q^2p&0&{\lambda_{12}+\lambda_{19}+\lambda_{26} } & \lambda_5 & \hphantom{-}0 & \hphantom{-}0 & \hphantom{-}0 & \hphantom{-}0 & \hphantom{-}0 & 0   \\
0&q^2p&0&{\lambda_{13}+\lambda_{20}+\lambda_{27} } & {-\lambda_{21}} & \lambda_6 & \hphantom{-}0 & \hphantom{-}0 & \hphantom{-}0 & \hphantom{-}0 & 0   \\
0&qp^2&0&{\lambda_{14}+\lambda_{21}+\lambda_{28}}& {-\lambda_{27}} & {-\lambda_{19}} & \lambda_7 & \hphantom{-}0 & \hphantom{-}0 & \hphantom{-}0 & 0   \\
0&q^2p&0&{\lambda_{15}+\lambda_{22}+\lambda_{29} }  & -\lambda_{23} & {-\lambda_{24}} & {-\lambda_{32}} & \lambda_8 & \hphantom{-}0 & \hphantom{-}0 & 0   \\
0&qp^2&0&\lambda_{16}+\lambda_{23}+\lambda_{30}          & {-\lambda_{29}} & {-\lambda_{25}} & {-\lambda_{31}} & {-\lambda_{26}} & \lambda_9 & \hphantom{-}0 & 0   \\
0&qp^2&0&{\lambda_{17}+\lambda_{24}+\lambda_{31} }  & {-\lambda_{18}} & {-\lambda_{22}} & -\lambda_{30} & {-\lambda_{20}} & {-\lambda_{28}} & \lambda_{10} & 0   \\
0&p^3&0& {\lambda_{18}+\lambda_{25}+\lambda_{32} }  & {-\lambda_{17}} & {-\lambda_{16}} & {-\lambda_{15}} & {-\lambda_{14}} & {-\lambda_{13}} & {-\lambda_{12}} & \lambda_{11}   
\end{pmatrix}, \nonumber
\end{align}
where \mbox{$\lambda_2 = l_1+k_1$} and \mbox{$\lambda_4 = l_2+k_2$}. In~the following Lemma~\ref{lm:dual3cpyTail} we construct specific dual solution matrices that satisfy the constraint matrices of the above form for $2$ and $3$ copies of correlated \bfunc{qNLB}s. The proof of Lemma~\ref{lm:dual3cpyTail} for the case \mbox{$n=3$} utilizes the following generalization of Observation~\ref{obs:psd2by2}.
\begin{theorem}[Corollary 7.2.4 in~\cite{Horn}]
\label{thm:Hornpsd}
Let $A$ be a \mbox{$n \times n$} Hermitian matrix, and let
\begin{equation}
p_A(t) = t^n + a_{n-1}t^{n-1} + \cdots + a_{n-m}t^{n-m}
\end{equation}
be the characteristic polynomial of~$A$. Suppose that \mbox{$0
  \leqslant m \leqslant n$} and \mbox{$a_{n-m} \neq 0$}. Then $A$ is
\psd{} if and only if \mbox{$a_k \neq 0$} for all \mbox{$n-m \leqslant
  k \leqslant n$} and \mbox{$a_k a_{k+1} < 0$} for \mbox{$k = n-m,
  \ldots, n-1$}. We~define \mbox{$a_n \equiv~1$}.
\end{theorem}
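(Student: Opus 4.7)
The plan is to work directly with the eigenvalues of $A$ and the elementary symmetric polynomials they generate. Since $A$ is Hermitian, its eigenvalues $\lambda_1,\ldots,\lambda_n$ are real, and $A$ is \psd{} if and only if every $\lambda_i \geqslant 0$. Expanding $p_A(t) = \prod_{i=1}^n (t-\lambda_i)$ gives the standard identity $a_{n-j} = (-1)^j e_j(\lambda_1,\ldots,\lambda_n)$, where $e_j$ is the $j$th elementary symmetric polynomial. The statement is then a translation between (i) non-negativity and multiplicity patterns of the $\lambda_i$'s and (ii) the vanishing and sign pattern of the $a_k$'s. The hypothesis that $p_A(t)$ terminates at $t^{n-m}$ with $a_{n-m}\neq 0$ says exactly that $0$ is a root of $p_A$ of multiplicity $n-m$, so the remaining $m$ roots are nonzero and real.

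For the forward direction, assume $A$ is \psd{} with rank $r$. Then, after relabelling, $\lambda_1,\ldots,\lambda_r>0$ and $\lambda_{r+1}=\cdots=\lambda_n=0$, so $e_j(\lambda)$ reduces to $e_j(\lambda_1,\ldots,\lambda_r)$, which is strictly positive for $1\leqslant j\leqslant r$ and zero for $j>r$. Thus $a_{n-j}=(-1)^j e_j$ is nonzero exactly for $0\leqslant j\leqslant r$, and consecutive coefficients alternate in sign. Matching with the form in the statement identifies $m=r$, $a_{n-m}\neq 0$, and $a_k a_{k+1}<0$ for $n-m\leqslant k\leqslant n-1$, as claimed.

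For the converse, the vanishing of $a_k$ for $k<n-m$ lets us factor $p_A(t) = t^{n-m} q(t)$ with $q(t)=t^m+a_{n-1}t^{m-1}+\cdots+a_{n-m}$ and $q(0)=a_{n-m}\neq 0$; the $m$ nonzero eigenvalues of $A$ are the roots of $q$, and Hermiticity guarantees they are real. The sign-alternation hypothesis, starting from $a_n=1>0$, gives $(-1)^j a_{n-j}>0$ for $0\leqslant j\leqslant m$. Substituting $t\mapsto -t$ and multiplying by $(-1)^m$, the coefficient of $t^{m-j}$ in $(-1)^m q(-t)$ equals $(-1)^j a_{n-j}$, which is positive; hence $(-1)^m q(-t)$ has all positive coefficients and therefore no positive real root, so $q$ has no negative real root. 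Combined with $q(0)\neq 0$, every root of $q$ is strictly positive, every eigenvalue of $A$ is non-negative, and $A$ is \psd.

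The only subtle step is the converse: given reality of the roots (which comes from Hermiticity), deduce positivity from the alternating sign pattern. I would handle this via the $t\mapsto -t$ trick above, which is the cleanest way to avoid a case analysis. The rest is bookkeeping: identifying $n-m$ as the multiplicity of $0$ as a root via the hypothesis $a_{n-m}\neq 0$, and confirming that the sign pattern of the $a_{n-j}$'s is exactly what the identity $a_{n-j}=(-1)^j e_j$ predicts when the nonzero eigenvalues are positive.
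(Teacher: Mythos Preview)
The paper does not prove this statement; it is quoted from Horn and Johnson's \emph{Matrix Analysis} (Corollary~7.2.4) and used as a black-box tool in the proof of Lemma~\ref{lm:dual3cpyTail}. Your proof is correct and complete: the forward direction is immediate from $a_{n-j}=(-1)^j e_j(\lambda)$ with the positive eigenvalues making each relevant $e_j$ strictly positive, and your converse via the substitution $t\mapsto -t$ is clean---once $(-1)^m q(-t)$ has all positive coefficients it cannot vanish for $t>0$, ruling out negative roots of $q$, and Hermiticity supplies the reality of the roots that makes this sufficient. The only cosmetic point is that in the forward direction $m$ is already fixed by the hypothesis (it is the multiplicity of $0$ as an eigenvalue), so rather than ``identifying $m=r$'' you are simply observing that the rank of a \psd{} matrix equals the number of its nonzero eigenvalues, namely~$m$.
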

Even though the formulation of Lemma~\ref{lm:dual3cpyTail} applies to the general $n$ copy case, we prove it only for $2$ and $3$ copy case due to the complexity of the off-diagonal constraints. 
\begin{lemma}
\label{lm:dual3cpyTail}
The matrix $W_\tail$ attains a dual solution value \, $1-x$, for \mbox{$n=2$} and \mbox{$n=3$}, where $x$ is defined in Equation~\ref{eq:defnx}.
\end{lemma}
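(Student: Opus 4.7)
The matrix $W_\tail$ has two kinds of free dual variables. The diagonal entries $k_1, k_2, \lambda_5, \ldots, \lambda_N$ each contribute one unit to the dual objective $\lambda' \cdot b$, so their sum must equal $1 - x$. The off-diagonal multipliers $\lambda_{N+1}, \ldots, \lambda_M$, which implement the equality constraints $z_s \cdot z_{s'} = z_t \cdot z_{t'}$ whenever $s \oplus s' = t \oplus t'$, do not appear in the objective and are free to be used to reshape the matrix. The plan is therefore to choose the diagonal entries so that their sum is exactly $1 - x$, and to use the off-diagonal multipliers to restore positive semidefiniteness.

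The first step is to exploit the $\mathbb{F}_2^n$-symmetry of the problem: the equality constraints partition into orbits under the translation action of $\mathbb{F}_2^n$ on the $z$-indices, and both the weights $q^{n-|s|}p^{|s|}$ appearing in column~$2$ of $K$ and the inner products $z_s \cdot z_{s'}$ (which depend only on $s \oplus s'$) respect this symmetry. Restricting to multipliers that are constant on each orbit cuts the effective parameter count from $O(2^{2n})$ down to $O(n)$, which is what makes the cases $n = 2$ and $n = 3$ tractable.

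After such a symmetric assignment, I would perform a congruence transformation on the nontrivial $(2^n+1) \times (2^n+1)$ submatrix of $W_\tail$, generalising the $\op{H} \otimes \id$ conjugation used in the single-copy case of Section~\ref{sec:singlecopy}. For $n = 2$, the resulting matrix should split into a direct sum of $2 \times 2$ blocks, each of which is handled by Observation~\ref{obs:psd2by2}; the determinantal inequalities reduce to a polynomial inequality in~$x$ that holds on the full relevant range. For $n = 3$, the $9 \times 9$ matrix does not fully block-diagonalize, and I would instead apply Theorem~\ref{thm:Hornpsd}: the $\mathbb{F}_2^3$ action decomposes the ambient space into isotypic components under the group action, so the characteristic polynomial of the reduced matrix should factor into low-degree pieces whose coefficient signs can be tracked by hand as functions of~$x$.

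The main obstacle is the $n = 3$ case. One must simultaneously pin down the symmetric multiplier assignment and verify sign alternation of the characteristic polynomial coefficients across the full range of~$x$; if the factorisation into low-degree pieces is cleaner than expected, the argument collapses to a handful of quadratic inequalities, but if it is not, one must rely on the full Horn criterion. An additional nuisance is that the definition of~$x$ in Equation~\ref{eq:defnx} changes at $p = 1/2$, so the argument has to be carried out separately in each subcase and the boundary~$p = 1/2$ checked for continuity, to ensure the same symmetric multiplier ansatz succeeds uniformly.
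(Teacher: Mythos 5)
Your high-level reading of the dual is right (only the first $N$ multipliers enter the objective, the constraint multipliers are free reshaping parameters, and symmetrizing over Hamming-weight classes is exactly the reduction that makes $n=2,3$ tractable --- the paper likewise equates the columns of equal weight to shrink the $3$-copy tail to a $5\times 5$ matrix). But the proposal stops at the point where the actual proof begins: the content of this lemma is an \emph{explicit feasible} dual assignment whose tail diagonal sums to $1-x$ and which is provably \psd{}, and you never exhibit one. Every load-bearing step is conditional ("should split into $2\times 2$ blocks", "should factor into low-degree pieces"), so nothing is verified for either range of $p$ in Equation~\ref{eq:defnx}. In the paper the $n=2$ tail is written down concretely as a difference $W_1-W_2$ of two matrices (one construction per range of $p$), and its \psd-ness is \emph{not} obtained by a Hadamard-type congruence at all: the matrix has rank~$2$ with row space spanned by its first two rows, so positivity reduces to the leading $2\times 2$ minor and Observation~\ref{obs:psd2by2}. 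The $\op{H}\otimes\id$ conjugation you propose to generalize is the device used for $W_\head$ in Lemma~\ref{lm:dual3cpyHead}, and there is no indication that the tail block-diagonalizes under it; asserting that it "should" is a gap, not a step.

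For $n=3$ the same issue is sharper. After the weight symmetrization the paper still has to find the specific entries (e.g.\ $\frac{9}{2}\frac{q^2p^3}{q^2+p^2}$, $\frac12\frac{q^4p}{q^2+p^2}$, etc., and a different matrix for $\frac12<p<1$), then observe concrete linear dependencies among the rows ($v_1=3v_3+v_5$, $v_2=-3v_4$ in one range, $v_2=-2v_3-v_4$ in the other) to drop to a rank-$3$ matrix with leading entry $p(3q^2+p^2)$ resp.\ $p$, matching $1-x$, and only then invoke Theorem~\ref{thm:Hornpsd} on a $3\times 3$ characteristic polynomial. Your plan invokes the $\mathbb{F}_2^3$ isotypic decomposition and hopes the characteristic polynomial of a $9\times 9$ matrix factors nicely, but without the explicit multiplier values there is no polynomial to analyze, and no argument that a symmetric ansatz with tail trace exactly $1-x$ is even feasible (that feasibility is precisely what the constructed matrices certify, and it interacts with the head via the entries $q^n - x$ and the split $\lambda_2=l_1+k_1$, $\lambda_4=l_2+k_2$, which your sketch does not track). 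As it stands the proposal is a reasonable search strategy that parallels the paper's skeleton, but it does not prove the lemma.
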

\begin{proof}
First consider the case \mbox{$n=2$}, for the range \mbox{$\frac{1}{2} \leqslant p \leqslant 1$}. The matrix
\begin{align*}
W_\tail &= W_1-W_2 \\ \nonumber
&=p
\begin{pmatrix} 
1 & \hphantom{-}0 & \hphantom{-}0 & \hphantom{-}0 & \hphantom{-}0\\
0 & \hphantom{-}\frac{q}{2} & -\frac{q}{2} & -\frac{q}{2} & -\frac{q}{2}\\
0 & -\frac{q}{2} & \hphantom{-}\frac{q}{2} & \hphantom{-}\frac{q}{2} & \hphantom{-}\frac{q}{2}\\
0 & -\frac{q}{2} & \hphantom{-}\frac{q}{2} & \hphantom{-}\frac{q}{2} & \hphantom{-}\frac{q}{2}\\
0 & -\frac{q}{2} & \hphantom{-}\frac{q}{2} & \hphantom{-}\frac{q}{2} & \hphantom{-}p-\frac{q}{2}
\end{pmatrix}-
p
\begin{pmatrix} 
\hphantom{-}0 & q & -q & -q & -p \\
\hphantom{-}q & 0 & \hphantom{-}0 & \hphantom{-}0 & \hphantom{-}0  \\
-q & 0 & \hphantom{-}0 & \hphantom{-}0 & \hphantom{-}0 \\
-q & 0 & \hphantom{-}0 & \hphantom{-}0 & \hphantom{-}0\\
-p & 0 & \hphantom{-}0 & \hphantom{-}0 & \hphantom{-}0
\end{pmatrix},
\end{align*}
attains a dual solution value equal to \mbox{$1-x=p$}. It~remains to show that \mbox{$W_\tail \succcurlyeq 0$}. The matrix~$W_\tail$ has rank~$2$ and its row space is spanned by its first two rows.  The upper left \mbox{$2 \times 2$} submatrix $p \left(\begin{smallmatrix} \hphantom{-}1 & -q\\-q & \hphantom{-}\frac{q}{2}\end{smallmatrix}\right)$ is \psd{} by Observation~\ref{obs:psd2by2} since \mbox{$q \leqslant \frac{1}{2}$}.  It~follows that $W_\tail$ is \psd{} as well.

Next consider the range \mbox{$0 < p < \frac{1}{2}$}, for \mbox{$n=2$}. We define the matrix $W_\tail$ as
\begin{align*}
W_\tail &= W_1 - W_2 \\ \nonumber 
&= p
\begin{pmatrix} 
2q & \hphantom{-}0 & \hphantom{-}0 & \hphantom{-}0 & \hphantom{-}0\\
0 & \hphantom{-}\frac{q}{2} & -\frac{p}{2} & -\frac{p}{2} & -\frac{q}{2}\\
0 & -\frac{p}{2} & \hphantom{-}\frac{q}{2} & \hphantom{-}\frac{q}{2} & \hphantom{-}\frac{p}{2}\\
0 & -\frac{p}{2} & \hphantom{-}\frac{q}{2} & \hphantom{-}\frac{q}{2} & \hphantom{-}\frac{p}{2}\\
0 & -\frac{q}{2} & \hphantom{-}\frac{p}{2} & \hphantom{-}\frac{p}{2} & \hphantom{-}\frac{q}{2}
\end{pmatrix}-
p
\begin{pmatrix} 
\hphantom{-}0 & p & -q & -q & -p \\
\hphantom{-}p & 0 & \hphantom{-}0 & \hphantom{-}0 & \hphantom{-}0  \\
-q & 0 & \hphantom{-}0 & \hphantom{-}0 & \hphantom{-}0 \\
-q & 0 & \hphantom{-}0 & \hphantom{-}0 & \hphantom{-}0\\
-p & 0 & \hphantom{-}0 & \hphantom{-}0 & \hphantom{-}0\\
\end{pmatrix},
\end{align*}
which attains a dual solution value $2pq = 1-x$. The matrix has rank~$2$ and its row space is spanned by its
first two rows.  The upper left \mbox{$2 \times 2$} submatrix
$\left(\begin{smallmatrix} \hphantom{-}2q & -p\\-p &  \hphantom{-}\frac{q}{2}\end{smallmatrix}\right)$ is \psd{} by
Observation~\ref{obs:psd2by2} since \mbox{$p < q$}.  It~follows that
$W_\tail$ is \psd\ as well.

For the case \mbox{$n=3$}, we begin by removing the two zero rows and columns from the $W_\tail$ matrix as specified in Equation~\ref{Wtail3cpy}. We~further restrict columns of the same Hamming weight to be equal. This corresponds to equating the three cases each for weight~$1$ and~$2$ identified by the entries $q^2p$ and $qp^2$ respectively. The reduction in the number of constraints~$\lambda$ allows us to consider a \mbox{$5 \times 5 $} matrix. A~construction for the case \mbox{$p \leqslant \frac{1}{2}$} yields the matrix
\begin{equation}
\bordermatrix{%
&\hphantom{-}v_1 & \hphantom{-}v_2 & \hphantom{-}v_3 & \hphantom{-}v_4 & \hphantom{-}v_5 \\  \cr
&  \hphantom{-}p \left( 3q^2+p^2 \right) & -3qp^2 & \hphantom{-}q^2p & \hphantom{-}qp^2 & \hphantom{-}p^3 \\ \cr
&-3qp^2 & \hphantom{-}\frac{9}{2}\frac{q^2p^3}{q^2+p^2} & -\frac{3}{4}qp^2 & -\frac{3}{2}\frac{q^2p^3}{q^2+p^2} & -\frac{3}{4}qp^2 \\ \cr
&\hphantom{-}q^2p &-\frac{3}{4}qp^2 & \hphantom{-}\frac{1}{2}\frac {q^4p}{q^2+p^2} & \hphantom{-}\frac{1}{4}qp^2 & \hphantom{-}\frac{q^2p}{2} \frac{2
p^2-q^2 }{q^2+p^2} \\ \cr
&\hphantom{-}qp^2 & -\frac{3}{2}\frac{q^2p^3}{q^2+p^2} & \hphantom{-}\frac{1}{4} qp^2 & \hphantom{-}\frac{1}{2} \frac{q^2p^3}{q^2+p^2} & \hphantom{-}\frac{1}{4} qp^2 \\ \cr
&\hphantom{-}p^3 & -\frac{3}{4}qp^2 & \hphantom{-}\frac{q^2p}{2} \frac {  2p^2-q^2}{q^2+p^2} & \hphantom{-}\frac{1}{4} qp^2 & \hphantom{-}\frac{p}{2} \frac{ 3q^4-4q^2p^2+2p^4}{q^2+p^2} \\ \cr
}.
\end{equation}
The rank of~$W_\tail$ may be further decreased by noting that \mbox{$v_1 = 3v_3 + v_5$} and \mbox{$v_2 = -3v_4$}. We~now choose to consider the rank~$3$ matrix spanned by $v_1, v_3$ and $v_4$, given~by
\begin{equation}
\label{eq:tail1}
W_{\tail}=
\begin{pmatrix}
  p \left( 3q^2+p^2 \right) & q^2p & qp^2  \\ 
q^2p & \frac{1}{2}\frac {q^4p}{q^2+p^2} & \frac{1}{4}qp^2  \\ 
qp^2 & \frac{1}{4} qp^2 & \frac{1}{2} \frac{q^2p^3}{q^2+p^2}
\end{pmatrix}.
\end{equation}
The dual value for $W_{\tail}$ in Equation~\ref{eq:tail1} is given by $p \left( 3q^2+p^2
\right)$ which equals \mbox{$1-x$}.  A~similar construction for the
case \mbox{$\frac{1}{2} < p < 1$} yields the matrix
\begin{equation*}
\bordermatrix{%
&v_1 & \hphantom{-}v_2 & \hphantom{-}v_3 & \hphantom{-}v_4 & \hphantom{-}v_5 \\  \cr
&  p & \hphantom{-}q^3-q & \hphantom{-}q^2p & \hphantom{-}qp^2 & \hphantom{-}p^3 \\ \cr
&q^3-q & \hphantom{-}\frac{qp}{2}\frac{p^3-2p+2}{q^2+p^2} & -\frac{q^2p}{4}\frac{q^2+4qp+p^2}{q^2+p^2} & -\frac{qp}{2}\frac{q^3 + qp^2+p^3}{q^2+p^2} & -\frac{3}{4}q^2p \\ \cr
&q^2p &-\frac{q^2p}{4}\frac{q^2+4qp+p^2}{q^2+p^2} & \hphantom{-}\frac{1}{2}\frac {q^3p^2}{q^2+p^2} & \hphantom{-}\frac{1}{4}q^2p & \hphantom{-}\frac{q^2p}{2}\frac{q^2-qp+p^2}{q^2+p^2} \\ \cr
&qp^2 & -\frac{qp}{2}\frac{q^3 + qp^2+p^3}{q^2+p^2} & \hphantom{-}\frac{1}{4} q^2p & \hphantom{-}\frac{1}{2} \frac{qp^4}{q^2+p^2} & -\frac{q^2p}{4}\frac{q^2-4qp+p^2}{q^2+p^2}\\ 
\cr
&p^3 & -\frac{3}{4}q^2p & \hphantom{-}\frac{q^2p}{2}\frac{q^2-qp+p^2}{q^2+p^2} & -\frac{q^2p}{4}\frac{q^2-4qp+p^2}{q^2+p^2} & \hphantom{-}\frac{p^2}{2} \frac{7p^3-13p^2+11p-3}{q^2+p^2} \\ \cr
}.
\end{equation*}
The rank of~$W_\tail$ may be further decreased by noting that \mbox{$v_1 = 3v_3 + v_5$} and \mbox{$v_2 = -2v_3-v_4$}. We~now choose to consider the rank~$3$ matrix spanned by $v_1, v_3$ and $v_4$, given by
\begin{equation}
\label{eq:tail2}
W_{\tail}=
\begin{pmatrix}
  p  & q^2p & qp^2  \\ 
q^2p & \frac{1}{2}\frac {q^3p}{q^2+p^2} & \frac{1}{4}q^2p  \\ 
qp^2 & \frac{1}{4} q^2p & \frac{1}{2} \frac{qp^4}{q^2+p^2}
\end{pmatrix}.
\end{equation}
The dual value for $W_{\tail}$ in Equation~\ref{eq:tail2} is~$p$, which equals
\mbox{$1-x$}. The fact that the Matrices~\ref{eq:tail1}
and~\ref{eq:tail2} are \psd{} may be verified by application of
Theorem~\ref{thm:Hornpsd} to the characteristic polynomials of these
matrices.
\end{proof}
The final part of our analysis constitutes the proof of
Theorem~\ref{thm:optimalupto3copies} which is obtained by combining
the dual solution values for both the matrices $W_\head$ and
$W_\tail.$

\begin{proofof}{Theorem~\ref{thm:optimalupto3copies}.}
We~prove that the Protocol~\ref{protocolP} is optimal for $2$ and $3$
copies by combining the dual values from Lemmas~\ref{lm:dual3cpyHead}
and~\ref{lm:dual3cpyTail}. No~distillation for the range
\mbox{$\frac{1}{2} < p \leqslant 1$} implies the dual solution values
are the same for both \mbox{$n=2$} and \mbox{$n=3$}. Also, the value
attained matches the value attained by Protocol~\ref{protocolP} and is
therefore tight. For \mbox{$\frac{2}{3} < p \leqslant 1$}, we have
\begin{align*}
V & =  3 - x + 1 -x \\
& =  4-2x\\
& =  2(1+p).
\end{align*}
For \mbox{$\frac{1}{2} < p \leqslant \frac{2}{3}$}, we obtain
\begin{align*}
V & =  \sqrt{\frac{(1+x)^3}{x}} + 1 -x \\
& =  \sqrt{\frac{(2-p)^3}{1-p}} + p \\
& =  (2-p)\sqrt{\frac{2-p}{1-p}} + p \\ 
& =  2(2-p) \cos(\phi) + p \\ 
& = 3 \cos(\phi) - q\cos(3\phi) + p.
\end{align*}
For \mbox{$0 < p \leqslant \frac{1}{2}$} and \mbox{$n=2$},
\begin{align*}
V & =  \left( \frac{3+(q-p)^2}{2} \right)\sqrt{\frac{3+(q-p)^2}{1+(q-p)^2}} + \frac{1-(q-p)^2}{2} \\
& =  (3+(q-p)^2)\cos(\phi) +\frac{1}{2}(1-(q-p)^2) \\
& =  (3+(q-p)^2)\cos(\phi) +2pq.
\end{align*}
Finally, for \mbox{$0 < p \leqslant \frac{1}{2}$} and \mbox{$n=3$},
\begin{align*}
V & =  \left( \frac{3+(q-p)^3}{2} \right)\sqrt{\frac{3+(q-p)^3}{1+(q-p)^3}} + \frac{1-(q-p)^3}{2} \\
& =  (3+(q-p)^3)\cos(\phi) +\frac{1}{2}(1-(q-p)^3).
\end{align*}
\end{proofof}
This concludes our proof of Theorem~\ref{thm:optimalupto3copies} that establishes \bfunc{qNLB}s as a stronger resource for nonlocality for non-adaptive protocols. We~have shown that if we restrict out attention to non-adaptive protocols, \bfunc{qNLB}s offer improved distillation over \bfunc{NLB}s.
\end{document}